\documentclass[aps,pra,reprint]{revtex4-2}

\usepackage[T1]{fontenc}
\usepackage[utf8]{inputenc}
\usepackage{mathptmx}
\usepackage[cal=cm]{mathalfa}
\usepackage{amsmath}
\usepackage{amssymb}
\usepackage{amssymb}
\usepackage{physics}
\usepackage{graphicx}
\usepackage[caption=false]{subfig}
\usepackage{multirow}
\usepackage{array}
\usepackage{makecell}
\usepackage{dcolumn}
\usepackage{xcolor}
\usepackage{comment}
\usepackage{float}
\usepackage{xparse}
\usepackage[colorlinks=true, linkcolor=red, citecolor=blue, urlcolor=blue]{hyperref}
\graphicspath{ {./images1/} }

\def\be{\begin{equation}}
\def\ee{\end{equation}}
\def\bea{\begin{eqnarray}}
\def\eea{\end{eqnarray}}
\def\ben{\begin{equation*}}
\def\een{\end{equation*}}
\def\bean{\begin{eqnarray*}}
\def\eean{\end{eqnarray*}}
\def\bi{\begin{itemize}}
\def\ei{\end{itemize}}
\def\bd{\begin{description}}
\def\ed{\end{description}}
\newcommand{\proj}[1]{\ket{#1}\bra{#1}}

\makeatletter
\@ifundefined{proof}{%
 \newenvironment{proof}[1][Proof]{\par\noindent\textit{#1.}\ }{\hfill$\square$\par}%
}{}
\makeatother
\newtheorem{theorem}{Theorem}
\newtheorem{lem}{Lemma}
\newtheorem{cor}{Corollary}

\newtheorem{defin}{Definition}

\begin{document}
	\title{Inequivalent Quantum Resources from Multipartite State Discrimination}
	\author{Srijani Das}
	\email{dassrijani33@gmail.com}
	\affiliation{Physics and Applied Mathematics Unit, 203 B.T. Road Indian Statistical Institute Kolkata, 700108}
  \author{Tuhin Paul}
	\email{tuhin.paul96@gmail.com}
	\affiliation{Physics and Applied Mathematics Unit, 203 B.T. Road Indian Statistical Institute Kolkata, 700108}
  \author{Manasi Patra}
	\email{manasipatra58@gmail.com}
	\affiliation{Physics and Applied Mathematics Unit, 203 B.T. Road Indian Statistical Institute Kolkata, 700108}
  \author{Atanu Bhunia}
	\email{atanu.bhunia31@gmail.com}
	\affiliation{Physics and Applied Mathematics Unit, 203 B.T. Road Indian Statistical Institute Kolkata, 700108}
  \author{Ramij Rahaman}
	\email{ramijrahaman@isical.ac.in}
	\affiliation{Physics and Applied Mathematics Unit, 203 B.T. Road Indian Statistical Institute Kolkata, 700108}
	
\begin{abstract}

Any two orthogonal quantum states can be perfectly distinguished via local operations and classical communication (LOCC), yet whether all orthogonal multipartite pairs incur an identical operational cost remains an open question. Here, we address the resource consumption required for local state discrimination across multipartite pair of states. We demonstrate that distinct pairs of orthogonal multipartite maximally entangled states require unequal amounts of quantum resources to achieve local distinguishability. Extending our framework beyond individual pairs, we analyze larger sets of both locally distinguishable and indistinguishable multipartite orthogonal states, revealing a nontrivial hierarchy in their entanglement consumption. These results show that the resource cost of LOCC discrimination is state-dependent, providing insight into the fine-grained structure of multipartite nonlocality.
\end{abstract}
  
\maketitle
\section{Introduction} 

A central goal of quantum information science is to harness nonclassical resources to surpass classical processing bounds~\cite{ChitambarResourceTheory2019}. Among these, entanglement is the key resource underpins foundational protocols including superdense coding~\cite{BennettWiesner1992}, teleportation~\cite{Bennett1993}, and quantum key distribution~\cite{Ekert1991}. While quantum correlations are typically consumed during such tasks, protocols that preserve or regenerate entanglement uncover a richer landscape of operational advantages. Notably, Jonathan and Plenio~\cite{JonathanPlenio1999} demonstrated that local operations and classical communication (LOCC) can be catalyzed by an auxiliary entangled state that is returned strictly intact~\cite{Ganardi2025}. This raises two foundational questions: can a protocol achieve a net resource gain relative to the assistance supplied, and what constitutes the minimal resource cost for a given task?

These questions bear directly on the local distinguishability of orthogonal states, an operational manifestation of quantum nonlocality distinct from Bell nonlocality~\cite{BennettPB1999,Walgate2000}. When spatially separated parties cannot perfectly identify an ensemble of mutually orthogonal states via LOCC, the ensemble exhibits nonlocality that underpins quantum data hiding~\cite{terhal58,divincenzo580} and secret sharing~\cite{rahaman330,markham309}. Although any two orthogonal pure states are perfectly distinguishable by LOCC~\cite{Walgate2000}, larger ensembles frequently exhibit local indistinguishability even without entanglement, a phenomenon termed ``nonlocality without entanglement''~\cite{BennettPB1999,Cohen2008,somsubhro2018,Halder2018,Gupta2023Hierarchical,bhunia2020,bhunia2023,bhunia2022,bhunia2026,bhuniaubb2024,subrata2024,bhunia2025,yu2012}. 

Conventional state discrimination protocols, however, generally consume or degrade the intrinsic quantum correlations of the target states. This motivates a sharper operational inquiry: can the classical information encoded in an entangled ensemble be extracted via LOCC while leaving the target states intact? While auxiliary entanglement can enable discrimination below the resource cost of teleportation~\cite{Cohen2008}, the ancillae are typically consumed. Cohen~\cite{Cohen2007} addressed the stricter regime where the target bipartite entanglement is conserved, revealing an inherent trade-off between consumed and protected resources. More recently, Halder \emph{et al.}~\cite{Halder2026} established cardinality bounds for discrimination assisted by exact entanglement batteries and catalysts. Nevertheless, determining the minimal auxiliary entanglement required to discriminate orthogonal states while preserving their intrinsic multipartite entanglement remains an open problem.

In this Letter, we address this challenge across bipartite and multipartite regimes. For bipartite systems, we establish that net resource extraction is fundamentally precluded when discriminating Bell states under LOCC. In the multipartite setting, however, an asymmetry emerges: two orthogonal $n$-qubit Greenberger-Horne-Zeilinger (GHZ) \cite{Greenberger1989} states with nonzero relative Hamming distance can be perfectly distinguished via LOCC assisted by a single two-qubit Bell pair, leaving the target GHZ state strictly undisturbed. This catalytic advantage vanishes at zero Hamming distance, where auxiliary entanglement provides no operational benefit for nondestructive discrimination. Extending this framework to larger ensembles of maximally and nonmaximally entangled GHZ states, we prove that any set possessing pairwise nonzero Hamming distances is perfectly distinguishable, whereas specific triples remain strictly indistinguishable under LOCC. Finally, we provide a complete classification of orthogonal GHZ ensembles that admit LOCC discrimination under minimal auxiliary ebits, delineating the exact boundaries where target correlations can be preserved.
\section{Resource extraction in bipartite LOCC discrimination}
{\textit{Quantum resources under LOCC}}- Within the resource theory of entanglement, the free operations are local operations and classical communication (LOCC). The set of free states $\mathcal{F}$ consists of all separable states, $\sigma_{\mathrm{sep}} = \sum_i p_i \rho_A^{(i)} \otimes \rho_B^{(i)}$,
where $\{p_i\}$ is a probability distribution and $\rho_A^{(i)}, \rho_B^{(i)}$ are local density operators. States $\rho \notin \mathcal{F}$ are entangled and constitute the quantum resource. A functional $\mathcal{R}$ qualifies as an entanglement monotone (resource quantifier) if it satisfies:
(i) \emph{Faithfulness}: $\mathcal{R}(\rho) \ge 0$, with $\mathcal{R}(\rho) = 0$ if and only if $\rho \in \mathcal{F}$;
(ii) \emph{Monotonicity}: $\mathcal{R}(\Lambda(\rho)) \le \mathcal{R}(\rho)$ for any deterministic operation $\Lambda \in \mathrm{LOCC}$; and
(iii) \emph{Ensemble monotonicity}: $\sum_k q_k \mathcal{R}(\rho_k) \le \mathcal{R}(\rho)$ under selective LOCC operations yielding outcomes with probabilities $q_k$ and normalized post-measurement states $\rho_k$.

\begin{defin}[Catalytic LOCC]
A catalytic LOCC (C-LOCC) process is an operation $\Lambda \in \mathrm{LOCC}$ acting jointly on a primary system $S$ and an auxiliary catalyst $C$ initialized in state $\sigma_C$. With respect to a given resource measure $\mathcal{R}$, the protocol is catalytic if the final marginal state of the catalyst, $\sigma'_C = \mathrm{Tr}_S[\Lambda(\rho_S \otimes \sigma_C)]$, satisfies $\mathcal{R}(\sigma'_C) \ge \mathcal{R}(\sigma_C)$. Exact catalysis corresponds to $\sigma'_C = \sigma_C$.

If the auxiliary resource is degraded or consumed, $\mathcal{R}(\sigma'_C) < \mathcal{R}(\sigma_C)$, we refer to the transformation as resource-assisted LOCC (RA-LOCC).
\end{defin}
\begin{defin}[C-LOCC Resource extraction]
A C-LOCC process $\Lambda$ is termed \emph{resource-extracting LOCC} (RE-LOCC) with respect to a measure $\mathcal{R}$ if the post-processing marginal state $\sigma'_C = \mathrm{Tr}_S[\Lambda(\rho_S \otimes \sigma_C)]$ strictly exceeds the initial catalytic resource, $\mathcal{R}(\sigma'_C) > \mathcal{R}(\sigma_C)$.
\end{defin}

We now examine the simplest setting to expose the resource cost of local discrimination. For two-qubit Bell states, perfect LOCC discrimination necessarily consumes the entanglement contained in the state.

\begin{theorem}\label{th_GHZ_1}
No quantum resource can be extracted under LOCC (or C-LOCC) discrimination of two Bell states auxiliary to the mutual information identifying the input state.
\end{theorem}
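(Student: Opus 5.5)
The plan is to reduce the statement to a monotonicity argument. We must show that any LOCC (or C-LOCC) protocol that perfectly discriminates two orthogonal Bell states $\ket{\psi_1},\ket{\psi_2}$ cannot leave more entanglement than was supplied. The total entanglement available to the parties is $\mathcal{R}(\psi_i\otimes\sigma_C)$, and we compare it with the final catalyst marginal $\sigma'_C$ conditioned on the classical label $i$. We fix $\mathcal{R}$ to be an additive-on-pure-states monotone, for definiteness the entropy of entanglement (or the distillable entanglement $E_D$). Then
\[
\mathcal{R}(\psi_i\otimes\sigma_C)=1+\mathcal{R}(\sigma_C),
\]
since each Bell state carries exactly one ebit.

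\textbf{Step 1: the upper bound.} By ensemble monotonicity (iii), any selective LOCC on $\psi_i\otimes\sigma_C$ with outcomes $k$ satisfies
\[
\sum_k q_k\,\mathcal{R}(\rho_k)\le 1+\mathcal{R}(\sigma_C).
\]
Here $\rho_k$ is the joint post-measurement state of system and catalyst. A priori, this allows the catalyst to end with up to one extra ebit, namely the ebit of the target. So monotonicity alone does not finish the proof.

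\textbf{Step 2: the discrimination constraint.} The key point is that perfect discrimination forces the outcomes $k$ to split into two disjoint classes $K_1,K_2$ with $\Pr(k\in K_j\mid i)=\delta_{ij}$. I would then argue that the state of the target system must be rendered effectively classical with respect to the label. I plan to show this with a data-hiding/symmetry argument. Two Bell states differ by a local Pauli, e.g.\ $\ket{\psi_2}=(\mathbb{1}\otimes\sigma)\ket{\psi_1}$. The protocol can therefore be viewed as LOCC-distinguishing $\rho=\tfrac12(\psi_1+\psi_2)\otimes\sigma_C$ with the label copied into a classical register. The averaged state $\tfrac12(\psi_1+\psi_2)$ is separable for any pair of Bell states, since it is a mixture of two product states in a suitable basis. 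Monotonicity applied to this averaged input gives
\[
\sum_k q_k\,\mathcal{R}(\sigma'_{C,k})\le \mathcal{R}\bigl(\tfrac12(\psi_1+\psi_2)\otimes\sigma_C\bigr)\le\mathcal{R}(\sigma_C),
\]
where the last inequality uses that appending a separable state cannot increase $E_D$.

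\textbf{Step 3: from the average to each input.} The label $i$ is perfectly readable from $k$, so the averaged ensemble decomposes as a convex combination of the two conditional branches. Each branch yields $\sigma'_C$ with $\mathcal{R}(\sigma'_C)\le\mathcal{R}(\sigma_C)$ on average. An averaged strict increase is therefore impossible, which excludes RE-LOCC. I would also remark that equality is attainable, for instance by discriminating in the product basis and leaving the catalyst untouched, so C-LOCC is possible but extraction is not.

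\textbf{Main obstacle.} Step 2 is where I expect the difficulty. The justification that "discrimination is equivalent to acting on the label-averaged state" requires care: the extra ebit in $\psi_i$ is itself label-dependent, and one must ensure a perfect discriminator cannot route it into $\sigma_C$ while still learning $i$. I would handle this by treating the label as a classical reference held by a third party R and using monotonicity of $E_D$ across the cut $A{:}B$ with R traced out. If that is too loose, the fallback is explicit. Any LOCC-perfect discrimination of two Bell states must, by Walgate et al., be implementable by one party measuring in a basis after which the other party's measurement is also local. This collapses $\psi_i$ to a product state, so its ebit is destroyed rather than transferred.
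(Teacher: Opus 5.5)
This is essentially the paper's argument: the equiprobable mixture of the two Bell states is separable, so no LOCC monotone can increase. Your catalyst bookkeeping and the linearity observation (for a perfect discriminator, each outcome $k\in K_i$ applied to the averaged input reproduces exactly the branch state of $\psi_i$) make it rigorous and already resolve your ``main obstacle,'' at the same label-averaged level the paper proves. Drop the Walgate fallback, however: Walgate et al.\ only show that a product-measurement protocol \emph{exists}, not that every perfect discriminator has that form, and with a catalyst ebit one can measure $\sigma_x\otimes\sigma_x$ nondestructively (the parity circuit of Fig.~\ref{fig:1} conjugated by local Hadamards) and so distinguish $|\Phi^\pm\rangle$ without collapsing them.
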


\begin{proof}
By local-unitary equivalence, it suffices to consider the pair of Bell states,
\be\label{Bell_2}
|\Phi^\pm\rangle=\tfrac{1}{\sqrt{2}}\left(|00\rangle\pm|11\rangle\right),
\ee
which can be perfectly distinguished by LOCC~\cite{Walgate2000}. For an equiprobable ensemble, the state prior to discrimination is
\[
\rho=\tfrac{1}{2}\left(|\Phi^+\rangle\langle\Phi^+|
+|\Phi^-\rangle\langle\Phi^-|\right)
=\tfrac{1}{2}\left(
|00\rangle\langle00|+
|11\rangle\langle11|
\right),
\]
which is manifestly separable. Hence, any quantum resource quantified by a monotone under LOCC has vanishing value on \(\rho\) and cannot be generated by the subsequent LOCC or C-LOCC discrimination. The nontrivial output of the protocol is therefore confined to the classical information identifying the input Bell state.
\end{proof}

A trivial extension of the Theorem \ref{th_GHZ_1} can be stated as 
\begin{cor}
    No quantum resource can be extracted under LOCC (or C-LOCC) discrimination of Bell states auxiliary to the mutual information identifying the input state.
\end{cor}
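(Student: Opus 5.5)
The plan is to reduce the corollary to the mechanism behind Theorem~\ref{th_GHZ_1}: the quantum state that LOCC actually acts on, after averaging over the unknown label, carries no resource. Beyond that, one only needs monotonicity of $\mathcal{R}$. Take an arbitrary set $\{|\psi_i\rangle\}_{i\in I}$ of Bell states that the protocol is asked to discriminate (perfectly or not), with priors $\{p_i\}$. The parties do not know $i$, so the protocol $\Lambda$ acts on
\be
\rho_S=\sum_{i\in I}p_i\,|\psi_i\rangle\langle\psi_i| ,
\ee
together with any catalyst $\sigma_C$. Since $\Lambda$ is linear, its output is $\sum_i p_i \Lambda(|\psi_i\rangle\langle\psi_i|\otimes\sigma_C)$. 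Any resource left over beyond the classical label must therefore be bounded by the resource of $\rho_S\otimes\sigma_C$.

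The key step is a case analysis on $|I|$, using local-unitary equivalence as in the proof of Theorem~\ref{th_GHZ_1}.

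\emph{Two states.} This is exactly Theorem~\ref{th_GHZ_1}, since any pair of Bell states is local-unitarily equivalent to $|\Phi^\pm\rangle$.

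\emph{All four states with equal priors.} Here $\rho_S=\mathbb{1}/4$, which is separable.

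\emph{Three states, or unequal priors.} $\rho_S$ is Bell-diagonal. By the standard criterion, a Bell-diagonal state is separable iff every eigenvalue is at most $1/2$. When this holds, faithfulness gives $\mathcal{R}(\rho_S)=0$. Then monotonicity and ensemble monotonicity give
\be
\sum_k q_k\mathcal{R}(\rho'_k)\le \mathcal{R}(\rho_S\otimes\sigma_C).
\ee
I would then argue, as in Theorem~\ref{th_GHZ_1}, that $\mathcal{R}(\rho_S\otimes\sigma_C)$ cannot exceed $\mathcal{R}(\sigma_C)$ when $\rho_S$ is separable. The reason is that appending a separable state is itself a free operation, so its removal cannot raise the resource. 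Consequently the catalyst marginal satisfies $\mathcal{R}(\sigma'_C)\le\mathcal{R}(\sigma_C)$, and no RE-LOCC process exists.

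The main obstacle is the regime where some prior $p_i>1/2$. There $\rho_S$ is entangled, so the separability shortcut fails. My first attempt would be a convexity and averaging argument. Perfect discrimination maps each $|\psi_i\rangle$ to a flagged state $|i\rangle\langle i|_{\rm cl}\otimes\tau_i$. Each $\tau_i$ has $\mathcal{R}(\tau_i)\le\mathcal{R}(\psi_i\otimes\sigma_C)$. Because the Bell states are locally interconvertible by Paulis, one can twirl the priors into the uniform case without changing the per-state resource. This reduces the problem to the separable cases above, and with them to Theorem~\ref{th_GHZ_1}. The remaining check is that this symmetrization commutes with the protocol's output, which follows from linearity of $\Lambda$.
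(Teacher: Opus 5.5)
Your argument is correct for equiprobable ensembles, which is all the paper proves, but it takes a different route. The paper reduces the three- and four-state cases to Theorem~\ref{th_GHZ_1}: a protocol extracting resource on the larger set would also do so on a two-state subensemble. You instead show directly that the uniform mixture of any two, three or four Bell states is separable, since it is Bell-diagonal with largest eigenvalue $1/2$, $1/3$ or $1/4$. You then run the chain $\mathcal{R}(\sigma'_C)\le\mathcal{R}(\Lambda(\rho_S\otimes\sigma_C))\le\mathcal{R}(\rho_S\otimes\sigma_C)\le\mathcal{R}(\sigma_C)$.

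This buys several things:
\begin{itemize}
\item It is self-contained and makes the catalyst bookkeeping explicit. Theorem~\ref{th_GHZ_1} only notes that the resource of the averaged input vanishes.
\item It covers every prior with $\max_i p_i\le 1/2$.
\item It avoids the paper's subensemble step. Stated for \emph{any} pair, that step is too strong; it needs an averaging or convexity argument to single out \emph{some} pair.
\end{itemize}
One wording fix: $\mathcal{R}(\rho_S\otimes\sigma_C)\le\mathcal{R}(\sigma_C)$ holds because \emph{appending} a separable state is LOCC. Removing it by partial trace gives the opposite inequality.

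The final paragraph, which you present as closing the case $\max_i p_i>1/2$, does not work. Write $R_i$ for the resource the fixed protocol $\Lambda$ leaves on input $\ket{\psi_i}$. The separable-mixture argument controls only the uniform average, $\frac{1}{|I|}\sum_i R_i\le\mathcal{R}(\sigma_C)$.

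Pre-composing with a random local Pauli and relabelling afterwards gives a \emph{different} protocol, whose per-input resources all become that average. So the twirl does change the per-state resource, and it says nothing about $\sum_i p_i R_i$ for the original $\Lambda$. Nothing in your argument forces the $R_i$ to coincide. A protocol with $R_1>\mathcal{R}(\sigma_C)>R_2$ is therefore not excluded, and a prior concentrated on the first label would then yield extraction.

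The per-state bound $\mathcal{R}(\tau_i)\le\mathcal{R}(\psi_i\otimes\sigma_C)$ you quote is far too weak, since it leaves room for the full ebit carried by $\ket{\psi_i}$. Linearity of $\Lambda$ does not help either, because the issue is how a nonlinear quantity is weighted.

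In this regime $\rho_S$ is itself entangled, so the free-state mechanism is unavailable. You would need a genuinely per-label bound coming from perfect discrimination. Schmidt-rank constraints on the accepting Kraus operators are one candidate: they settle the case of a maximally entangled catalyst, but not obviously a general one. The paper never enters this regime, because Theorem~\ref{th_GHZ_1}, and hence the corollary, is proved only for equiprobable ensembles. Either restrict your claim to $\max_i p_i\le 1/2$ or drop that last step.
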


\begin{proof} The two-Bell-state case follows directly from Theorem~\ref{th_GHZ_1}, leaving only ensembles of three or four Bell states to be addressed. Suppose that an LOCC or C-LOCC protocol could extract resource during the discrimination of three or four Bell states. The same protocol would then extract at least as much resource when discriminating any two-state subensemble, in direct contradiction with Theorem~\ref{th_GHZ_1}. Each case can also be proved independently. \end{proof}

\section{Resource extraction in multipartite LOCC discrimination}

We now extend the analysis to general \(n\)-qubit entangled states. To quantify the structural separation between their computational-basis components, we use the Hamming distance.

{\setlength{\abovedisplayskip}{3pt}\setlength{\belowdisplayskip}{3pt}%
 \setlength{\abovedisplayshortskip}{1pt}\setlength{\belowdisplayshortskip}{1pt}%
\begin{defin}[Hamming distance]
For bit strings $\mathbf{x},\mathbf{y}\in\{0,1\}^n$, $d_H(\mathbf{x},\mathbf{y})=\sum_{k=1}^{n}(x_k\oplus y_k)$ is the number of positions in which they differ.
\end{defin}
We use $d_H$ to characterize the structural separation between entangled states through their computational-basis strings.

\begin{defin}[Hamming distance of a set of states]
Consider two normalized $n$-qubit states, \ben \displaystyle\vert{}\psi\rangle = \sum_{\mathbf{x} \in \operatorname{\mathrm{supp}}(\psi)} \lambda_{\mathbf{x}} \vert{}\mathbf{x}\rangle \text{ and } \vert{}\phi\rangle = \sum_{\mathbf{y} \in \operatorname{\mathrm{supp}}(\phi)} \gamma_{\mathbf{y}} \vert{}\mathbf{y}\rangle,\een where $\mathbf{x}, \mathbf{y} \in \{0,1\}^n$ and the sums run over all basis configurations with nonvanishing amplitudes. We define the Hamming distance between $\vert{}\psi\rangle$ and $\vert{}\phi\rangle$ as the minimum bitwise distance across their computational supports,
\begin{equation*}
d_H(|\psi\rangle, |\phi\rangle) \equiv \min_{\substack{\mathbf{x} \in \operatorname{\mathrm{supp}}(\psi) \ \mathbf{y} \in \operatorname{\mathrm{supp}}(\phi)}} d_H(\mathbf{x}, \mathbf{y}).
\end{equation*}
Extending this metric to an ensemble of $k$ distinct states, $\mathcal{S} = \{\vert{}\psi_i\rangle\}_{i=1}^k$, the Hamming distance of the set corresponds to the closest pair:
\ben
d_H(\mathcal{S}) \equiv \min_{i \neq j} \left\{d_H(|\psi_i\rangle, |\psi_j\rangle): \textbf{ for all } |\psi_i\rangle, |\psi_j\rangle\in \mathcal{S}\right\}.\een

\end{defin}}

\subsection{Entanglement non-preserving state discrimination}
The Hamming distance introduced here establishes an operational criterion for both preserving and non-preserving state discrimination, quantifying the degree to which post-measurement states remain preserved. In a state preserving discrimination, allowing the states to be identified while leaving the input system completely undisturbed. This metric naturally interfaces with the operational costs of quantum resource distillation under LOCC. In this letter, we focus on multipartite orthogonal sets, with particular emphasis on GHZ states. A complete orthonormal basis of $n$-GHZ states can be parametrized up to an irrelevant global phase as
\be\label{GHZ_basis}
\ket{\psi_n({q,\mathbf{r}})}=\dfrac{1}{\sqrt{2}}\left(\ket{1}^{\otimes \mathbf{r}}\ket{0}^{\otimes \bar{\mathbf{r}}}+(-1)^q\ket{0}^{\otimes \mathbf{r}}\ket{1}^{\otimes \bar{\mathbf{r}}}\right),
\ee
where $q\in\{0,1\}$, $\mathbf{r}\subseteq \{1,2,\dots,n-1\}$ and $\bar{\mathbf{r}}=\{1,2,\dots,n\}\smallsetminus \mathbf{r}$ denotes the complement of $\mathbf{r}$. For $\mathbf{r} = \emptyset$ (empty set), Eq.~\eqref{GHZ_basis} reduces to the standard canonical GHZ state, $\ket{\psi_n(q,\emptyset)} = (\ket{0}^{\otimes n} + (-1)^q \ket{1}^{\otimes n})/\sqrt{2}$. The effective overlap between any two basis states $\ket{\psi_n(q,\mathbf{r})}$ and $\ket{\psi_n(p,\mathbf{m})}$ is characterized by the Hamming distance
\begin{equation*}
d_H\left(\ket{\psi_n(q,\mathbf{r})}, \ket{\psi_n(p,\mathbf{m})}\right) \equiv \left|\left(\mathbf{r} \cup \mathbf{m}\right)\smallsetminus \left(\mathbf{r} \cap \mathbf{m}\right)\right|,
\end{equation*}
defined here via the cardinality of the symmetric difference of the active index sets.

\begin{theorem}\label{th_GHZ_2}
Under LOCC (or C-LOCC), discriminating any pair of orthogonal $n$-qubit GHZ states separated by zero relative Hamming distance cannot yield any net quantum resource beyond the classical information identifying the state. 
\end{theorem}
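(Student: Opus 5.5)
The plan is to mirror the proof of Theorem~\ref{th_GHZ_1}: show that the equiprobable mixture of the two states is fully separable, and then use monotonicity of every resource quantifier under LOCC.

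First, I will pin down what zero relative Hamming distance means in the parametrization \eqref{GHZ_basis}. Here $d_H(\ket{\psi_n(q,\mathbf{r})},\ket{\psi_n(p,\mathbf{m})}) = |(\mathbf{r}\cup\mathbf{m})\smallsetminus(\mathbf{r}\cap\mathbf{m})|$, so zero distance forces $\mathbf{r}=\mathbf{m}$. Orthogonality then forces $q\neq p$. Hence, without loss of generality, the pair is $\ket{\psi_n(0,\mathbf{r})}$ and $\ket{\psi_n(1,\mathbf{r})}$.

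Second, I will apply local Pauli-$X$ flips on the qubits indexed by $\mathbf{r}$. This local unitary maps the pair to the canonical pair $(\ket{0}^{\otimes n}\pm\ket{1}^{\otimes n})/\sqrt{2}$. Since local unitaries are invertible free operations, both the resource content and LOCC distinguishability are unchanged, so it suffices to treat the canonical pair.

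Third, I will compute the equiprobable mixture. The off-diagonal coherences cancel, giving
\begin{equation*}
\rho=\tfrac12\left(\proj{\psi_n(0,\emptyset)}+\proj{\psi_n(1,\emptyset)}\right)=\tfrac12\left(\proj{0}^{\otimes n}+\proj{1}^{\otimes n}\right).
\end{equation*}
This state is fully separable, and in particular separable across every bipartition. The same cancellation holds for unequal priors $p_0,p_1$, because the cross terms carry the factor $p_0-p_1$. I expect the main obstacle to be handling this non-equiprobable case and making the notion of net extraction precise, and I will treat it as follows.

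In the C-LOCC setting, a catalyst $\sigma_C$ is adjoined. The joint input $\rho\otimes\sigma_C$ then has resource content equal to that of $\sigma_C$ alone: appending or discarding a separable factor is free, and we assume the monotone is tensor-stable on free states.

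For a general prior ensemble, I will argue at the level of the classical-quantum state. The referee's label register together with the system forms a state whose system part, conditioned on the label, can be prepared locally from the separable mixture once the label is communicated. Concretely, the parties can recreate $\ket{\psi_n(q,\mathbf{r})}$ from $\ket{0}^{\otimes n}$, or reproduce the outcome of discrimination, only using the classical label.

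Combining these pieces with ensemble monotonicity (iii), I will conclude that $\mathcal{R}(\sigma'_C)\le\mathcal{R}(\sigma_C)$ averaged over outcomes. Any net gain would therefore have to come from the label, which is classical information. This establishes the statement, in the same way as the argument for Theorem~\ref{th_GHZ_1}.
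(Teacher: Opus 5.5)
Your core argument matches the paper's proof. Zero distance forces $\mathbf r=\mathbf m$ and $q\neq p$. Flips by $\sigma_x$ on the qubits in $\mathbf r$ map the pair to $\ket{\psi_n(q,\emptyset)}$. The equiprobable mixture is $\tfrac12(\proj{0}^{\otimes n}+\proj{1}^{\otimes n})$, and monotonicity then caps the catalyst at $\mathcal R(\sigma_C)$. The paper also works only with the equiprobable ensemble. You also do not need tensor-stability: appending a separable state via shared randomness and tracing it out are both LOCC, so $\mathcal R(\rho\otimes\sigma_C)=\mathcal R(\sigma_C)$ follows from monotonicity alone.

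The trouble is your extension to unequal priors, where two steps are false. First, the cross terms do not cancel:
\[
p_0\proj{\psi_n(0,\emptyset)}+p_1\proj{\psi_n(1,\emptyset)}=\tfrac12\left(\proj{0}^{\otimes n}+\proj{1}^{\otimes n}\right)+\tfrac{p_0-p_1}{2}\left(\ket{0}\bra{1}^{\otimes n}+\ket{1}\bra{0}^{\otimes n}\right).
\]
The factor $p_0-p_1$ you cite vanishes only at $p_0=p_1$. For $p_0\neq p_1$ this GHZ-diagonal state has largest eigenvalue $\max(p_0,p_1)>1/2$. By the G\"uhne--Seevinck criterion it is therefore genuinely multipartite entangled, and the separability route is lost. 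Second, your fallback cannot work. No LOCC procedure, whatever classical label it is given, prepares the entangled state $\ket{\psi_n(q,\mathbf r)}$ from $\ket{0}^{\otimes n}$. So the classical--quantum state is not locally preparable from the separable mixture.

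Either drop the unequal-prior claim, as the paper does, or replace it with an argument that is valid. In a perfect discrimination protocol each outcome $k$ occurs for only one label $q(k)$. Hence the normalized post-measurement state given $k$ equals the one produced from $\ket{\psi_n(q(k),\mathbf r)}\otimes\sigma_C$, and it does not depend on the priors. Write $A_q$ for the outcome-averaged final resource when the label is $q$. Ensemble monotonicity applied to the separable equiprobable mixture then gives $\tfrac12(A_0+A_1)\le\mathcal R(\sigma_C)$, which is a statement about the protocol itself. This bounds only the label-averaged resource; it does not bound $p_0A_0+p_1A_1$ when $p_0\neq p_1$. So ``no net extraction'' should be stated in this averaged sense.
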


\begin{proof} Any pair of orthogonal $n$-qubit GHZ states sharing zero relative Hamming distance is equivalent, up to local unitaries, to the canonical phase-split pair
\begin{equation*}
|\psi_n(p, \emptyset)\rangle = \frac{1}{\sqrt{2}} \left( |0\rangle^{\otimes n} + (-1)^p |1\rangle^{\otimes n} \right), \quad p \in \{0,1\}.
\end{equation*}
Since two orthogonal pure states are deterministically distinguishable via LOCC~\cite{Walgate2000}, here achieved by having each party measure in the local $\sigma_x$ basis and combining outcomes via classical communication, the protocol unavoidably destroys the target state.

To evaluate potential quantum resource extraction, consider the a priori state prior to measurement. For an equiprobable ensemble, the initial state is given by the statistical mixture
\begin{equation*}
\rho_1 = \frac{1}{2} \sum_{p=0}^1 |\psi_n(p, \emptyset)\rangle \langle \psi_n(p, \emptyset)| = \frac{1}{2} |0\rangle\langle 0|^{\otimes n} + \frac{1}{2} |1\rangle\langle 1|^{\otimes n}.
\end{equation*}
The ensemble average $\rho_1$ is a classical mixture of fully product states; hence, it is entirely separable across every partition and possesses zero genuine multipartite entanglement. Because LOCC (as well as catalytic LOCC) operations cannot generate entanglement from separable states, no quantum resource can be extracted or converted from the ensemble. The process yields only classical mutual information corresponding to the identity of $p$, completing the proof. 
\end{proof}

\begin{cor} Under LOCC (or C-LOCC), discriminating any ensemble of mutually orthogonal $n$-qubit GHZ states that contains a pair with zero Hamming distance yields no quantum resource. \end{cor}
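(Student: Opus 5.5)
The plan is to reduce the corollary to Theorem~\ref{th_GHZ_2} in the same way the Bell-state corollary was reduced to Theorem~\ref{th_GHZ_1}, and to back this up with a direct separability computation.

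\textbf{Step 1: isolate the bad pair.} Let $\mathcal{S}=\{\ket{\psi_n(q_i,\mathbf{r}_i)}\}_{i=1}^k$ be the orthogonal GHZ ensemble. By hypothesis it contains two members $\ket{\psi_n(q,\mathbf{r})}$ and $\ket{\psi_n(p,\mathbf{m})}$ with $d_H=0$. From the symmetric-difference formula, this forces $\mathbf{r}=\mathbf{m}$, and orthogonality then forces $q\neq p$. Applying the local unitary $\bigotimes_{j\in\mathbf{r}}\sigma_x^{(j)}$ to every member of $\mathcal{S}$ maps this pair to the canonical phase-split pair $\ket{\psi_n(0,\emptyset)},\ket{\psi_n(1,\emptyset)}$. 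The image is still an orthogonal GHZ ensemble, and resource monotones are invariant under local unitaries, so we may assume this form.

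\textbf{Step 2: reduction argument.} Suppose some LOCC or C-LOCC protocol $\Lambda$ perfectly discriminated $\mathcal{S}$ and produced a net gain $\mathcal{R}(\sigma'_C)>\mathcal{R}(\sigma_C)$ on average. Restricting the input to the two-state subensemble, the same protocol perfectly discriminates the pair. Its catalyst output is the corresponding conditional average, so ensemble monotonicity (iii) lets us transfer the gain to the pair. This contradicts Theorem~\ref{th_GHZ_2}.

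\textbf{Step 3: direct separability check.} To make the transfer rigorous rather than rely on this heuristic, I would argue directly as in the proof of Theorem~\ref{th_GHZ_2}. Since the two pair members enter with equal prior weight, the input decomposes as
\[
\rho_{\mathcal{S}}=\tfrac{2}{k}\rho_1+\tfrac{k-2}{k}\rho_{\rm rest}.
\]
Here $\rho_1$ is the separable mixture $\tfrac12(\proj{0}^{\otimes n}+\proj{1}^{\otimes n})$, and $\rho_{\rm rest}$ is the average over the remaining members. Any pair among the rest sharing the same $\mathbf{r}$ likewise averages to a classical mixture of product strings. More generally, one can dephase the whole ensemble: the discrimination task only uses the classical label, and the target states are not preserved (Theorem~\ref{th_GHZ_2} shows the canonical pair is necessarily destroyed). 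Then any resource in the final catalyst must come from $\sigma_C$ alone. By monotonicity (ii), LOCC acting on $\rho_1\otimes\sigma_C$ cannot increase $\mathcal{R}$ beyond $\mathcal{R}(\sigma_C)$.

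\textbf{Main obstacle.} The delicate step is Step 2/3: justifying that an extraction from the larger ensemble implies extraction from the zero-distance pair. Unpaired members of $\mathcal{S}$ do not average to a separable state, so separability of the full mixture cannot be claimed outright. I would handle this by conditioning on the protocol's classical outcome identifying the pair's branch. In that branch the protocol acts as a perfect discriminator of $\ket{\psi_n(0,\emptyset)},\ket{\psi_n(1,\emptyset)}$, and Theorem~\ref{th_GHZ_2} applies there. On the other branches, the protocol cannot exceed the catalyst's initial resource either, by the subensemble argument applied to each member. Combined with ensemble monotonicity, this yields no net resource.
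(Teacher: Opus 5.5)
Your Step~1 is correct, and Step~2 is exactly the paper's entire proof. The paper only asserts that extraction from the ensemble would give extraction from every subensemble, in particular from the zero-distance pair, contradicting Theorem~\ref{th_GHZ_2}. You are right that this transfer is not automatic. With equal priors the catalyst marginal is $\sigma'_C=\tfrac{2}{k}\tau_{\mathrm{pair}}+\tfrac{1}{k}\sum_{i\notin\mathrm{pair}}\tau_i$. Ensemble monotonicity (iii) only bounds outcome-averaged resource by that of the whole input $\rho_{\mathcal S}\otimes\sigma_C$; it does not move a gain onto the pair.

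Your repair does not close this gap. The dephasing step cannot be assumed for an arbitrary protocol. The LOCC twirl that would make each member separable, a random $\sigma_z$ on one qubit, sends $\ket{\psi_n(0,\mathbf{r})}$ and $\ket{\psi_n(1,\mathbf{r})}$ to the same mixture, so it erases exactly the information the task needs. More seriously, the decisive claim that on the branches of members outside the pair the catalyst cannot gain ``by the subensemble argument applied to each member'' is false. Theorem~\ref{th_GHZ_2} says nothing about singletons or nonzero-distance subensembles. Those inputs carry entanglement that LOCC can move into the catalyst; compare Theorem~\ref{th_2GHZ_nonzero}, which leaves the GHZ state intact.

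Here is a concrete counterexample. Take $n=3$, the equiprobable ensemble $\{\ket{\psi_3(0,\emptyset)},\ket{\psi_3(1,\emptyset)},\ket{\psi_3(0,\{1\})}\}$, and $\sigma_C=\proj{\Phi^+}_{1'2'}\otimes\proj{0}_{3'}$. The CNOT parity check of Fig.~\ref{fig:1} on qubits $1,2$ identifies $\ket{\psi_3(0,\{1\})}$ without disturbing it. In that branch, party~1 applies $\sigma_x$ and each party swaps its system qubit into its catalyst register, leaving a GHZ state on $1'2'3'$. In the other branch, $\sigma_x^{\otimes 3}$ measurements reveal $q$, and the catalyst is reset to $\ket{000}$.

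Discrimination is perfect, yet in the $\ket{\psi_3(0,\{1\})}$ branch the catalyst ends in a GHZ state. That state dominates $\sigma_C$ under every LOCC monotone and strictly exceeds it across $3'|1'2'$. Even the averaged marginal $\sigma'_C=\tfrac13\proj{\mathrm{GHZ}}+\tfrac23\proj{000}$ has negativity $1/6$ across $3'|1'2'$, a cut across which $\sigma_C$ is product. Adding a large enough multiple of this negativity to any faithful monotone (e.g.\ $E_R$ relative to fully separable states, with $E_R(\sigma_C)=1$) gives a measure obeying (i)--(iii) that registers a net gain.

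So the subensemble transfer used in your Step~2, and in the paper's one-line proof, cannot be made rigorous at this level of generality. A valid argument has to restrict the catalyst or the resource measure and then argue directly. One option is entanglement across a cut on which the ensemble average is separable. Here that cut is $1|23$: the three states become $\Phi^+,\Phi^-,\Psi^+$, whose equal mixture is separable, so no gain is possible across it.
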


The result follows directly from Theorem~\ref{th_GHZ_2}: if distillable resources could be extracted during the discrimination of a given ensemble, they could likewise be extracted when discriminating any of its subensembles, including a zero-distance orthogonal pair.

\begin{theorem}\label{th_2nGHZ}
  Any ensemble of mutually orthogonal $n$-qubit GHZ states possessing pairwise nonzero Hamming distances can be deterministically distinguished via LOCC. 
\end{theorem}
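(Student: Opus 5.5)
The plan is to distinguish the states by local measurements in the computational basis. Each GHZ state $\ket{\psi_n(q,\mathbf{r})}$ is supported on exactly two computational strings. One is the string $\mathbf{s}_{\mathbf{r}}$ with ones precisely on $\mathbf{r}$. The other is its bitwise complement $\bar{\mathbf{s}}_{\mathbf{r}}$, with ones precisely on $\bar{\mathbf{r}}$. Since $n\notin\mathbf{r}$, the label $\mathbf{r}$ fixes the unordered pair $\{\mathbf{s}_{\mathbf{r}},\bar{\mathbf{s}}_{\mathbf{r}}\}$.

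The first step translates the hypothesis into disjointness of supports. A nonzero Hamming distance between $\ket{\psi_n(q,\mathbf{r})}$ and $\ket{\psi_n(p,\mathbf{m})}$ means the minimum bitwise distance over their supports is positive. So no computational string lies in both supports, and hence $\mathbf{r}\neq\mathbf{m}$.

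One must also check that distinct index sets $\mathbf{r}\neq\mathbf{m}$ cannot share a support string. This is the only point that needs care. If $\mathbf{s}_{\mathbf{r}}=\bar{\mathbf{s}}_{\mathbf{m}}$, then $\mathbf{r}=\bar{\mathbf{m}}$ would contain the index $n$, which is excluded by the parametrization in Eq.~\eqref{GHZ_basis}. Hence the supports are disjoint exactly when $\mathbf{r}\neq\mathbf{m}$. This matches the paper's formula $d_H=|\mathbf{r}\triangle\mathbf{m}|$, so pairwise nonzero distance means all $\mathbf{r}_i$ in the ensemble are distinct.

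Next comes the protocol. Every party measures its qubit in the computational basis $\{\ket0,\ket1\}$ and broadcasts the outcome. This is a one-way LOCC protocol, with no adaptivity needed. The joint outcome is a string $\mathbf{x}$ that lies in the support of the true state with certainty. Because the supports of the ensemble members are pairwise disjoint, $\mathbf{x}$ belongs to exactly one member's support. The parties identify the state as that member, or equivalently recover $\mathbf{r}$ by normalizing $\mathbf{x}$ so that its last bit is $0$. This succeeds with probability one for any prior, which gives deterministic discrimination.

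I do not expect a genuine obstacle. The main point to state carefully is that the phase bit $q$ cannot appear twice for the same $\mathbf{r}$ within the ensemble, since two such states would have zero Hamming distance and are excluded by hypothesis. This is exactly why the computational-basis measurement, which is blind to $q$, suffices here, in contrast to the setting of Theorem~\ref{th_GHZ_2}.

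It is also worth remarking that this protocol destroys the GHZ entanglement. This sets up the later question of whether an auxiliary Bell pair can make the discrimination nondestructive.
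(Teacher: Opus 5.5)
Your proposal is correct and follows the same route as the paper: every party measures in the computational basis and broadcasts the result, and the correlation/anticorrelation pattern (equivalently, the outcome string normalized so its $n$th bit is $0$) reveals $\mathbf{r}$, which suffices because nonzero pairwise Hamming distance forces the $\mathbf{r}$'s in the ensemble to be distinct. Your explicit disjoint-support argument handles arbitrary phase bits $q$ directly, so it avoids the paper's step of reducing the maximal ensemble to $\mathcal{S}_+$ ``up to local unitaries'' and is slightly cleaner.
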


\begin{proof} The maximal cardinality of such an ensemble is $2^{n-1}$; establishing deterministic LOCC distinguishability for this maximal set immediately implies the result for any subset. Up to local unitaries and global phases, any maximal ensemble $\mathcal{S}_+$ can be parameterized as
\begin{equation*}
\mathcal{S}_+ = \left\{ |\psi_n(0, \mathbf{r})\rangle : \mathbf{r} \subseteq \{1, 2, \dots, n-1\} \right\},
\end{equation*}
where $|\psi_n(0, \mathbf{r})\rangle$ is defined in Eq.~\eqref{GHZ_basis}, and distinct subsets $\mathbf{r} \neq \mathbf{m}$ correspond to orthogonal states.

To distinguish the states deterministically, each party measures their local qubit independently in the computational ($\sigma_z$) basis. For an unknown state $|\psi_n(\mathbf{r})\rangle$, the measurement outcomes within $\mathbf{r}$ are perfectly correlated, as are those within $\bar{\mathbf{r}}$, while an exact anticorrelation emerges between the two complementary sets. Broadcast of these classical measurement results over classical channels unambiguously identifies $\mathbf{r}$, completing the deterministic LOCC discrimination protocol.
\end{proof}

To generalize these results to nonmaximally entangled GHZ states, we express the nonmaximal GHZ basis, up to an overall phase, as
\begin{equation}
\label{GHZ_basis_nonmax}
\ket{\psi_n^{\alpha^{\mathbf{r}}}(q,\mathbf{r})} = \cos\alpha_q^{\mathbf{r}} \ket{1}^{\otimes \mathbf{r}} \ket{0}^{\otimes \bar{\mathbf{r}}} + \sin\alpha_q^{\mathbf{r}} \ket{0}^{\otimes \mathbf{r}} \ket{1}^{\otimes \bar{\mathbf{r}}},
\end{equation}
where $\alpha_q^{\mathbf{r}} \equiv \alpha^{\mathbf{r}} - q(\pi/2)$ with $\alpha^{\mathbf{r}} \in (0, \pi/2)$, $q \in \{0,1\}$, and $\bar{\mathbf{r}} = \{1, \dots, n\} \setminus \mathbf{r}$ for any subset $\mathbf{r} \subseteq \{1, \dots, n-1\}$. Theorem~\ref{th_2nGHZ} extends directly to nonmaximally entangled GHZ states:

\begin{cor}
Any set of mutually orthogonal $n$-qubit, maximally or nonmaximally entangled, GHZ states of the form Eq.~\eqref{GHZ_basis_nonmax} with nonzero pairwise Hamming distances is perfectly distinguishable by LOCC.
\end{cor}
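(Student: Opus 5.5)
The plan is to reuse the protocol from Theorem~\ref{th_2nGHZ}: every party measures its own qubit in the computational ($\sigma_z$) basis and broadcasts the outcome. The key observation is that the Hamming-distance condition refers only to the computational supports of the states, and these supports do not depend on the amplitudes. For any $\mathbf{r}$ and any $\alpha^{\mathbf{r}}\in(0,\pi/2)$, the state in Eq.~\eqref{GHZ_basis_nonmax} satisfies
\[
\operatorname{supp}\bigl(\ket{\psi_n^{\alpha^{\mathbf{r}}}(q,\mathbf{r})}\bigr)=\bigl\{\ket{1}^{\otimes \mathbf{r}}\ket{0}^{\otimes \bar{\mathbf{r}}},\ \ket{0}^{\otimes \mathbf{r}}\ket{1}^{\otimes \bar{\mathbf{r}}}\bigr\}.
\]
This holds for both values of $q$, because $\cos\alpha_q^{\mathbf{r}}$ and $\sin\alpha_q^{\mathbf{r}}$ are both nonzero whenever $\alpha^{\mathbf{r}}$ lies in the open interval. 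This is the same support as in the maximal case of Eq.~\eqref{GHZ_basis}.

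The first step is to read off what the hypothesis implies. If two states in the set have labels $\mathbf{r}\neq\mathbf{m}$, their supports consist of the strings $\mathbf{1}_{\mathbf{r}},\overline{\mathbf{1}_{\mathbf{r}}}$ and $\mathbf{1}_{\mathbf{m}},\overline{\mathbf{1}_{\mathbf{m}}}$. Since $n\notin\mathbf{r},\mathbf{m}$, the minimal Hamming distance between the two supports is $|\mathbf{r}\triangle\mathbf{m}|>0$. The supports are therefore disjoint. If instead two states share the same $\mathbf{r}$, their distance is zero, which the hypothesis excludes. So nonzero pairwise Hamming distance forces every element of the set to carry a distinct label $\mathbf{r}$, with at most one value of $q$ (and one $\alpha^{\mathbf{r}}$) per label. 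This also shows that orthogonality is automatic within such sets, since disjoint supports give orthogonal states.

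The second step is the protocol itself. After the $\sigma_z$ measurements the parties hold a string $\mathbf{x}$ that lies in the support of the unknown state. Each string belongs to the support of exactly one label: $\mathbf{x}$ determines its pair $\{\mathbf{x},\bar{\mathbf{x}}\}$, and the pair is indexed by $\mathbf{r}=\{k: x_k\neq x_n\}$, a subset of $\{1,\dots,n-1\}$. The parties broadcast their outcomes and compute $\mathbf{r}$ from the correlations with party $n$, exactly as in Theorem~\ref{th_2nGHZ}. Because the set contains at most one state per label, identifying $\mathbf{r}$ identifies the state with certainty, whatever the amplitudes. The protocol is a one-round LOCC with zero error, so discrimination is perfect.

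The main point to handle carefully is the claim that zero distance occurs only for equal labels. In particular, the asymmetric amplitudes must not allow two different $q$ values with the same $\mathbf{r}$ to coexist in the set. Because the support is independent of $\alpha$ and $q$, this reduces to the same combinatorial fact as in the maximal case. I would state this explicitly and note the role of the open interval $\alpha^{\mathbf{r}}\in(0,\pi/2)$: at the endpoints the state becomes a product state with a one-element support, and the argument would need separate treatment.
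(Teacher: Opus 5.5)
Your proposal is correct and follows the paper's route: the paper states this corollary as a direct extension of Theorem~\ref{th_2nGHZ}, and you make explicit the point it leaves implicit, namely that the computational supports (and hence the $\sigma_z$-based identification of $\mathbf{r}$) do not depend on $\alpha^{\mathbf{r}}$ or $q$. One small correction: for $\mathbf{r}\neq\mathbf{m}$ the minimal support distance is $\min(|\mathbf{r}\triangle\mathbf{m}|,\,n-|\mathbf{r}\triangle\mathbf{m}|)$ rather than $|\mathbf{r}\triangle\mathbf{m}|$, but since $n\notin\mathbf{r}\cup\mathbf{m}$ it is still positive, so your argument is unaffected.
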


\begin{theorem}\label{th_3GHZ}
No ensemble of three mutually orthogonal $n$-qubit GHZ states possessing zero Hamming distance can be deterministically distinguished via LOCC.
\end{theorem}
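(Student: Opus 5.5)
By the Corollary following Theorem~\ref{th_GHZ_2}, it suffices to look at an orthogonal triple $\{|\psi_1\rangle,|\psi_2\rangle,|\psi_3\rangle\}$ in which some pair has zero Hamming distance. The plan is to coarse-grain the $n$ parties into two, and to show that across a suitable bipartition the triple becomes exactly three Bell states living in an embedded $2\otimes 2$ subspace. Three Bell states are known not to be perfectly LOCC distinguishable. Every $n$-party LOCC protocol is in particular an LOCC protocol for the bipartition, so indistinguishability across the cut implies indistinguishability for the $n$ parties.

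\emph{Normal form.} First I would put the triple into a normal form. By local unitaries, as in the proof of Theorem~\ref{th_GHZ_2}, I would map the zero-distance pair to $|\psi_n(0,\emptyset)\rangle$ and $|\psi_n(1,\emptyset)\rangle$. The third state is then some $|\psi_n(q,\mathbf{r})\rangle$ from Eq.~\eqref{GHZ_basis}. If $\mathbf{r}=\emptyset$, orthogonality fails, since only two states share that support. Hence $\mathbf{r}\neq\emptyset$, and also $\mathbf{r}\neq\{1,\dots,n\}$ by the parametrization.

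\emph{Reduction to Bell states.} Next I choose the cut $A=\mathbf{r}$, $B=\bar{\mathbf{r}}$. On $A$, the strings $|0\rangle^{\otimes\mathbf{r}}$ and $|1\rangle^{\otimes\mathbf{r}}$ span a two-dimensional subspace, which I identify with a logical qubit $\{|\tilde 0\rangle_A,|\tilde 1\rangle_A\}$; I do the same on $B$. In these logical bases the three states read
\[
\tfrac{1}{\sqrt2}\bigl(|\tilde0\tilde0\rangle\pm|\tilde1\tilde1\rangle\bigr),\qquad \tfrac{1}{\sqrt2}\bigl(|\tilde1\tilde0\rangle+(-1)^q|\tilde0\tilde1\rangle\bigr),
\]
which are three of the four Bell states. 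These subspace identifications are local isometries on $A$ and on $B$ separately, so any $A|B$ LOCC protocol on the original space yields an LOCC protocol on the embedded $2\otimes2$ system, with extra local dimensions only.

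\emph{Main obstacle.} The main difficulty is justifying that three Bell states remain LOCC-indistinguishable when each party holds a larger local space, which they may use as local ancilla. I would first try the entanglement-monotone argument of Ghosh \emph{et al.}, which is insensitive to local dimension. Let $A'B'$ be an auxiliary pair of qubits, and mix the three Bell states on $AB$, each correlated with the same Bell state on $A'B'$. The resulting state $\rho$ on $AA'|BB'$ has one ebit of distillable entanglement obtainable by first discriminating $AB$ perfectly, if that were possible, and then keeping $A'B'$. On the other hand, a suitable partial transpose (or relative entropy bound) shows that $\rho$ is separable or PPT across $AA'|BB'$ once the fourth Bell state is added appropriately, giving a contradiction. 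Alternatively, I would invoke the general bound that at most $d$ maximally entangled states in $d\otimes d$ are perfectly LOCC distinguishable, which here gives $d=2<3$. That bound is proved via the same monotone argument and holds for states supported on any embedded $d\otimes d$ subspace.
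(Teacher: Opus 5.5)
Your proposal is correct and follows essentially the paper's own proof: bring the zero-distance pair to $|\psi_n(q,\emptyset)\rangle$, cut along $\mathbf{r}|\bar{\mathbf{r}}$ so that the triple becomes three Bell states, and invoke the LOCC indistinguishability of three Bell states, using that any $n$-party LOCC protocol is also LOCC across the coarser cut. The ``main obstacle'' you raise is not a real one, since LOCC already allows arbitrary local ancillas and the local isometries make the embedded triple LOCC-equivalent to the genuine Bell triple; note, though, that your sketch of the Ghosh et al.\ argument is loose, because the three-term mixture $\rho$ is in fact NPT across $AA'|BB'$, and what closes the argument is $E_D(\rho)\le S(\rho\,\|\,\rho_4)=\log_2(4/3)<1$, where $\rho_4$ is the four-term mixture, which is separable across that cut.
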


\begin{proof}
Up to local unitaries, any triplet containing a zero-Hamming-distance pair can be mapped to the canonical form [cf.~Eq.~\eqref{GHZ_basis}]:
\begin{equation}
\label{GHZ_3}
\begin{aligned}
|\psi_n(q,\emptyset)\rangle &= \frac{1}{\sqrt{2}}\left(|0\rangle^{\otimes n} + (-1)^q|1\rangle^{\otimes n}\right), ~q=0,1, \text{ and}\\
|\psi_n(0,\mathbf{r})\rangle &= \frac{1}{\sqrt{2}}\left(|1\rangle_{\mathbf{r}}|0\rangle_{\bar{\mathbf{r}}} + |0\rangle_{\mathbf{r}}|1\rangle_{\bar{\mathbf{r}}}\right),
\end{aligned}
\end{equation}
with $\emptyset \neq \mathbf{r} \subset \{1,\dots,n\}$ and $\bar{\mathbf{r}}$ denoting its complement. 

Now consider the bipartite cut $\mathbf{r}|\bar{\mathbf{r}}$, where parties within $\mathbf{r}$ and $\bar{\mathbf{r}}$ may perform joint operations within their respective local laboratories. Across this cut, the three states in Eq.~\eqref{GHZ_3} reduce to three orthogonal two-qubit Bell states: $|\Phi^+\rangle$, $|\Phi^-\rangle$, and $|\Psi^+\rangle$. Because any LOCC protocol under the original $n$-partite partition remains valid under this bipartite coarsening, the indistinguishability of three Bell states~\cite{ghosh2001distinguishability} strictly rules out deterministic LOCC discrimination of the multipartite ensemble. 
\end{proof}
\begin{cor}
No ensemble of $2^{n-1} + 1$ mutually orthogonal $n$-qubit GHZ states can be deterministically distinguished by LOCC.
\end{cor}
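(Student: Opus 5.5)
The plan is a pigeonhole argument that reduces the corollary to Theorem~\ref{th_3GHZ}. The $n$-qubit GHZ basis in Eq.~\eqref{GHZ_basis} has $2^n$ elements, labelled by the pairs $(q,\mathbf{r})$ with $q\in\{0,1\}$ and $\mathbf{r}\subseteq\{1,\dots,n-1\}$. Group these into $2^{n-1}$ classes by $\mathbf{r}$ alone. Each class is the phase-split pair $\{\ket{\psi_n(0,\mathbf{r})},\ket{\psi_n(1,\mathbf{r})}\}$. By the symmetric-difference formula for $d_H$, two basis states have zero relative Hamming distance exactly when they lie in the same class. I would first argue that it suffices to consider ensembles drawn from this basis: "mutually orthogonal $n$-qubit GHZ states" is read, as throughout this section, as states of the form Eq.~\eqref{GHZ_basis} up to a common local unitary.

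Next, apply pigeonhole. An ensemble of $2^{n-1}+1$ distinct basis states cannot place at most one state in each of the $2^{n-1}$ classes. So some class is fully occupied, and the ensemble contains a zero-Hamming-distance pair $\ket{\psi_n(0,\mathbf{r})},\ket{\psi_n(1,\mathbf{r})}$. Since $2^{n-1}+1\ge 3$ for $n\ge 2$, there is at least one further state $\ket{\psi_n(p,\mathbf{m})}$ with $\mathbf{m}\neq\mathbf{r}$. Applying $\sigma_x$ on the qubits in $\mathbf{r}$ maps the pair to $\ket{\psi_n(q,\emptyset)}$ and sends the third state to another GHZ basis state with a nonempty index set (the symmetric difference). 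Up to a local $\sigma_z$ phase, this is exactly the canonical triple of Eq.~\eqref{GHZ_3}. Theorem~\ref{th_3GHZ} then says this triple is not deterministically LOCC-distinguishable.

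Finally, lift the conclusion to the whole ensemble. Suppose an LOCC protocol perfectly distinguished the full ensemble. Running the same protocol on inputs promised to come from the three-element subensemble would perfectly distinguish those three states, contradicting Theorem~\ref{th_3GHZ}. This is the same subensemble monotonicity already used in the corollaries to Theorems~\ref{th_GHZ_1} and~\ref{th_GHZ_2}.

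The main obstacle is the reduction in the first step: justifying that an arbitrary set of mutually orthogonal GHZ states can be brought simultaneously into the computational GHZ basis by one local unitary. I would handle this by adopting the paper's convention that the ensemble lives in a fixed GHZ basis. Alternatively, I would note that the bound $2^{n-1}$ from Theorem~\ref{th_2nGHZ} is the maximal size of a zero-distance-free family, so exceeding it forces a zero-distance pair, which is all the argument uses. A secondary check is that the third state, after the relabelling, indeed has $\mathbf{m}\triangle\mathbf{r}\neq\emptyset$. This follows because $\mathbf{m}\neq\mathbf{r}$ as subsets of $\{1,\dots,n-1\}$.
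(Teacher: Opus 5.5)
Your proposal is correct and is essentially the paper's argument. The paper only says that the corollary follows directly from Theorem~\ref{th_3GHZ}, leaving implicit three steps you spell out: the pigeonhole over the $2^{n-1}$ classes indexed by $\mathbf{r}$ that forces a zero-Hamming-distance pair plus a third state, the local-unitary reduction of that triple to the canonical form (your checks that $\mathbf{m}\triangle\mathbf{r}\neq\emptyset$ and that a single local $\sigma_z$ fixes the phase are right), and the subensemble monotonicity that lifts indistinguishability to the full ensemble.
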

This follows directly from Theorem~\ref{th_3GHZ}. 
\begin{theorem}\label{th_GHZ_any} 
Any ensemble $\Omega$ of $k > 2$ mutually orthogonal $n$-qubit GHZ states containing at least one pair with zero Hamming distance cannot be deterministically distinguished via LOCC. While assisted discrimination is achievable using $\lfloor \log_2 k \rfloor$ ebits of shared entanglement, deterministic preservation of the target GHZ state or resource extraction remains strictly forbidden.
\end{theorem}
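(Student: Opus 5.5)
The statement has three parts, and I would prove them in order: local indistinguishability, sufficiency of $\lfloor\log_2 k\rfloor$ ebits, and impossibility of preservation or extraction.

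\emph{Indistinguishability.} Let $\Omega$ contain a zero-Hamming-distance pair. Up to local unitaries this pair is $|\psi_n(0,\mathbf{r})\rangle,|\psi_n(1,\mathbf{r})\rangle$, and I relabel so that $\mathbf{r}=\emptyset$. Since $k>2$, $\Omega$ contains a third state $|\psi_n(p,\mathbf{m})\rangle$. If $\mathbf{m}\neq\emptyset$, the triple is exactly of the form treated in Theorem~\ref{th_3GHZ}, after absorbing the phase $p$ by a local $\sigma_z$ on one qubit of $\mathbf{m}$. That phase flip keeps $|\psi_n(q,\emptyset)\rangle$ within the pair up to a global phase, so the triple remains of that form. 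Any LOCC protocol perfectly discriminating $\Omega$ would also perfectly discriminate every subensemble. Theorem~\ref{th_3GHZ} therefore rules out deterministic LOCC discrimination of $\Omega$. The case $\mathbf{m}=\emptyset$ cannot occur, since only two orthogonal GHZ states have $\mathbf{r}=\emptyset$.

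\emph{Assisted discrimination.} I would first apply the $\sigma_z$-basis measurement of Theorem~\ref{th_2nGHZ} to all parties. This perfectly identifies the index $\mathbf{r}$ through the correlation pattern, but it erases the phase bit $q$. I would then recover $q$ by a teleportation-based step. One party, say party~$n$, consumes ebits shared with a second party to teleport its qubit, so that the parity structure can be read as a Bell-type measurement. Equivalently, I would map the problem across a bipartition to discriminating among the Bell-like states obtained in Theorem~\ref{th_3GHZ}. There, $k$ states carry $\log_2 k$ bits, and standard results on entanglement-assisted Bell discrimination (teleportation of one qubit per ebit) give perfect discrimination with at most $\lfloor\log_2 k\rfloor$ ebits. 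For the counting, I would group the states by $\mathbf{r}$ and note that each ebit resolves one binary degree of freedom, either a phase bit or a partition bit. The main obstacle is the exact ebit count. The naive teleportation of $n-1$ qubits costs far more than $\lfloor\log_2 k\rfloor$. I would try to show that only the qubits distinguishing the labels present in $\Omega$ need to be centralized. If this fails, I would fall back to proving only an upper bound of the form ``one ebit per unresolved bit of the label.''

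\emph{No preservation or extraction.} I would use the average-state argument of Theorems~\ref{th_GHZ_1} and~\ref{th_GHZ_2}. Preserving the target means the post-measurement state equals the input, conditioned on the outcome. Restricting the protocol to the zero-distance pair then gives a protocol that discriminates $|\psi_n(0,\emptyset)\rangle,|\psi_n(1,\emptyset)\rangle$ while returning them. The sum of the GHZ entanglement and the returned catalyst would then exceed what LOCC can produce from $\rho_1\otimes\sigma_C$ with $\rho_1$ separable. Concretely, I would apply ensemble monotonicity of a monotone $\mathcal{R}$ to the mixture $\rho_1\otimes\sigma_C$. The output carries the full GHZ resource plus $\mathcal{R}(\sigma_C')\geq\mathcal{R}(\sigma_C)$, whereas the input has only $\mathcal{R}(\sigma_C)$ because $\rho_1$ is separable. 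Since the GHZ state has strictly positive $\mathcal{R}$, this is a contradiction. The same inequality forbids a net extraction $\mathcal{R}(\sigma_C')>\mathcal{R}(\sigma_C)$, by the Corollary following Theorem~\ref{th_GHZ_2}.
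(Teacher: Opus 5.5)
\textbf{The gap: the assisted-discrimination protocol.} You do not give a protocol that works, and the one you sketch fails at its first step. After every party measures $\sigma_z$, the state is $|1\rangle^{\otimes\mathbf{r}}|0\rangle^{\otimes\bar{\mathbf{r}}}$ or $|0\rangle^{\otimes\mathbf{r}}|1\rangle^{\otimes\bar{\mathbf{r}}}$ with probability $1/2$ each, whatever $q$ is. The phase bit is therefore destroyed, and no later teleportation step can recover it.

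Passing to a bipartition does not rescue this. Coarse-graining is valid only for impossibility arguments, as in Theorem~\ref{th_3GHZ}, because a bipartite protocol uses joint operations inside each group that separated parties cannot perform. The three-Bell-state picture also exists only for one triple and its own cut, not for all of $\Omega$.

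The missing idea is to learn $\mathbf{r}$ \emph{without} collapsing the superposition, and to spend the state on $q$ only at the end. The paper does this with the gadget of Fig.~\ref{fig:1}:
\begin{itemize}
\item[(1)] Parties $A_j$ and $A_k$ share $|\Phi^+\rangle$, and each applies a CNOT from its GHZ qubit onto its half of the ancilla.
\item[(2)] Both branches of $|\psi_n(q,\mathbf{r})\rangle$ have the same $\sigma_z$-parity on qubits $j,k$ (odd iff exactly one of $j,k$ lies in $\mathbf{r}$). So the ancilla factors off as $|\Phi^+\rangle$ or $|\Psi^+\rangle$.
\item[(3)] A $\sigma_z\otimes\sigma_z$ measurement on the ancilla then reads one parity bit per ebit. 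The GHZ coherence, and hence $q$, is left intact.
\item[(4)] Once $\mathbf{r}$ is isolated, if both phases of $\mathbf{r}$ lie in $\Omega$, measuring $\sigma_x$ on every qubit yields $q$, since $\sigma_x^{\otimes n}$ has eigenvalue $(-1)^q$.
\end{itemize}
Teleporting qubit $a$ to party $b$, who then reads the $\sigma_z$-parity of qubits $a,b$ locally, would also give one bit per ebit. That works only if it is done before anything is measured, and your proposal does not set up such a protocol.

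\textbf{The ebit count.} You are right that the count is the hard step, and you should not expect to close it. The paper obtains $\lfloor\log_2 k\rfloor$ only by asserting that at every stage some two-party parity bisects the candidate set, and that is false in general. Take the phase pair at $\mathbf{r}=\emptyset$ together with $|\psi_n(0,\{j\})\rangle$ for $j=1,\dots,s$. A parity on qubits $a,b$ is odd only on the labels $\{a\}$ and $\{b\}$, so for $s\ge 3$ no parity bisects the set.

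The count itself also fails if the ebits are Bell pairs shared between two parties, as in the paper's protocol. Across the cut $\{j\}|\text{rest}$, the states $|\psi_n(0,\emptyset)\rangle$, $|\psi_n(1,\emptyset)\rangle$ and $|\psi_n(0,\{j\})\rangle$ become three Bell states. So every $j\le s$ must hold half of some pair; otherwise the resource is local to one side, and three Bell states are LOCC-indistinguishable. Hence at least $\lceil s/2\rceil$ pairs are needed, which for $s=7$ ($k=9$, $n\ge 8$) gives $4>\lfloor\log_2 9\rfloor=3$. What the gadget honestly yields is essentially your fallback: one ebit per parity bit read.

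\textbf{The other two parts.} Your indistinguishability argument is the paper's reduction to Theorem~\ref{th_3GHZ}, done with a bit more care. Your no-preservation argument follows the paper's separable-average route via Theorem~\ref{th_GHZ_2}. However, for a general $\mathcal{R}$ obeying (i)--(iii), such as the indicator of being entangled, $\mathcal{R}(\psi\otimes\sigma_C')$ does not split into a GHZ part plus a catalyst part. You need an additive monotone there, for example squashed entanglement across a cut.
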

\begin{proof}
The unassisted LOCC indistinguishability follows directly from Theorem~\ref{th_3GHZ}. Furthermore, the presence of a pair with zero Hamming distance implies that $\Omega$ contains both phase counterparts, $\{|\psi_n(0,\mathbf{r})\rangle, |\psi_n(1,\mathbf{r})\rangle\}$, for some configuration $\mathbf{r}$. By Theorem~\ref{th_GHZ_2}, deterministically preserving the target state or extracting quantum resources from this pair under LOCC is fundamentally precluded.

To establish assisted distinguishability, we decompose the ensemble as $\Omega = \Omega_+ \cup \Omega_-$, where $\Omega_\pm = \Omega \cap \mathcal{S}_\pm$ and $\mathcal{S}_\pm = \{|\psi_n(\substack{0\\1},\mathbf{r})\rangle\}_{\mathbf{r}\subseteq\{1,2,\dots,n-1\}}$. At each step of the identification, there exist two parties whose joint computational parity ($\sigma_z \otimes \sigma_z$) bisects the candidate set into subsets of size at most $\lfloor |\Omega|/2 \rfloor$. To extract this parity without disturbing the target state, the two parties consume a shared Bell pair $|\Phi^+\rangle$ as an ancilla. Each party applies a local $\textsc{cnot}$ gate controlled by their respective GHZ qubit and targeting their half of the Bell pair (see the Fig. \ref{fig:1}); a subsequent local $\sigma_z \otimes \sigma_z$ measurement on the ancilla reveals the parity partition nondestructively.

Iterating this bisection at most $\lfloor \log_2 k \rfloor$ times uniquely isolates the vector index $\mathbf{r}$. If the candidate set contains only a single state corresponding to $\mathbf{r}$ (i.e., $|\psi_n(1,\mathbf{r})\rangle \notin \Omega_-$), the state is fully identified and remains completely intact. Conversely, if both phase states are present ($|\psi_n(1,\mathbf{r})\rangle \in \Omega_-$), the remaining binary phase index $q \in \{0,1\}$ is resolved by measuring every qubit in the local $\sigma_x$ basis. Denoting the total number of $-1$ measurement outcomes by $m$, the global parity $m \oplus |\mathbf{r}|$ deterministically yields $q$. This final projective step destroys the target state precisely when phase discrimination is required, completing the proof.
\end{proof}
\begin{lem}
Any ensemble of mutually orthogonal $n$-qubit GHZ states, whether maximally or nonmaximally entangled, can be deterministically distinguished via LOCC given two copies of the unknown state.
\end{lem}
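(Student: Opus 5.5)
The plan is to use the two copies for different pieces of information. Write the unknown state as $|\psi_n(q,\mathbf{r})\rangle$ in the maximal case, or $|\psi_n^{\alpha^{\mathbf{r}}}(q,\mathbf{r})\rangle$ in the nonmaximal case. The first copy will determine the configuration label $\mathbf{r}$; the second copy will then determine the phase label $q$. Because every ensemble considered is a subset of the full GHZ basis (maximal or nonmaximal), it is enough to treat the full basis of $2^n$ states. LOCC distinguishability of the full basis implies it for every subset.

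\emph{Step 1: identify $\mathbf{r}$ from the first copy.} Each party measures their qubit of the first copy in the $\sigma_z$ basis, as in the proof of Theorem~\ref{th_2nGHZ}. Both $|\psi_n(0,\mathbf{r})\rangle$ and $|\psi_n(1,\mathbf{r})\rangle$ are supported only on the strings $1_{\mathbf{r}}0_{\bar{\mathbf{r}}}$ and $0_{\mathbf{r}}1_{\bar{\mathbf{r}}}$. Fixing the convention $n\notin\mathbf{r}$, the broadcast outcome string therefore determines $\mathbf{r}$ uniquely: $\mathbf{r}$ is the set of positions whose bit differs from the bit at position $n$. The same holds for the nonmaximal states~\eqref{GHZ_basis_nonmax}, since for $\alpha^{\mathbf{r}}\in(0,\pi/2)$ both branches have nonzero amplitude, and only these two strings ever occur.

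\emph{Step 2: identify $q$ from the second copy.} Once $\mathbf{r}$ is known, the two candidates for the second copy form an orthogonal pair. In the maximal case this pair is $|\psi_n(0,\mathbf{r})\rangle$ and $|\psi_n(1,\mathbf{r})\rangle$. In the nonmaximal case it is $\cos\alpha|1_{\mathbf{r}}0_{\bar{\mathbf{r}}}\rangle+\sin\alpha|0_{\mathbf{r}}1_{\bar{\mathbf{r}}}\rangle$ and $\sin\alpha|1_{\mathbf{r}}0_{\bar{\mathbf{r}}}\rangle-\cos\alpha|0_{\mathbf{r}}1_{\bar{\mathbf{r}}}\rangle$, with $\alpha=\alpha^{\mathbf{r}}$. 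Walgate et al.~\cite{Walgate2000} show that any two orthogonal multipartite pure states can be perfectly distinguished by LOCC, so a protocol exists for the second copy.

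For the maximal case I will give this protocol explicitly. Apply local $\sigma_x$ on the qubits in $\mathbf{r}$, which maps the pair to the canonical pair of Theorem~\ref{th_GHZ_2}. Then every party measures in the $\sigma_x$ basis, and the parity of the outcomes yields $q$, consistent with the final step in the proof of Theorem~\ref{th_GHZ_any}. For the nonmaximal case I will invoke the Walgate construction. The first party measures in a basis chosen so that, conditional on its outcome, the remaining parties hold orthogonal states. One then proceeds recursively, or equivalently uses the bipartite reduction: across the cut between $\mathbf{r}$ and $\bar{\mathbf{r}}$, the pair behaves as two orthogonal two-qubit states in the span of $|10\rangle$ and $|01\rangle$.

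\emph{Main obstacle.} The delicate point is verifying that the Walgate sequential protocol really is implementable one qubit at a time in the nonmaximal case, and not merely across a bipartition. I would handle this by choosing the first party's basis $\{(|0\rangle\pm e^{i\theta}|1\rangle)/\sqrt{2}\}$ with a phase $\theta$ tuned to the amplitudes. The conditional states of the remaining $n-1$ parties are then again orthogonal GHZ-type pairs of the same two-branch form, which lets the argument proceed by induction on $n$. The base case $n=2$ is the original Walgate result. Classical communication of the earlier outcomes, including $\mathbf{r}$ from Step 1, makes every choice of basis available to the parties, so the whole procedure is LOCC and deterministic.
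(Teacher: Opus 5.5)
Your proposal is correct and follows the paper's proof: $\sigma_z$ measurements on the first copy fix $\mathbf{r}$ (as in Theorem~\ref{th_2nGHZ}), and the second copy resolves $q$ by the $\sigma_x$-parity readout in the maximal case (your preliminary $\sigma_x^{\mathbf{r}}$ rotation is harmless but unnecessary, since $\sigma_x^{\otimes n}|\psi_n(q,\mathbf{r})\rangle=(-1)^q|\psi_n(q,\mathbf{r})\rangle$ already holds). Your explicit appeal to Walgate et al.\ for the nonmaximal phase step is more careful than the paper's ``follows directly'', because the parity readout fails when $\alpha^{\mathbf{r}}\neq\pi/4$; the obstacle you flag is also milder than you feared, since any equatorial basis on one qubit already leaves the conditional pair orthogonal, so no tuning of $\theta$ is needed.
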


\begin{proof}
We establish the protocol explicitly for maximally entangled states; the extension to nonmaximal states follows directly. Up to local unitaries and an overall phase, any ensemble of mutually orthogonal $n$-qubit GHZ states embeds into the complete basis family
\begin{equation*}
\mathcal{S} = \left\{ |\psi_n(q,\mathbf{r})\rangle : q \in \{0,1\}, \, \mathbf{r} \subseteq \{1, 2, \dots, n-1\} \right\},
\end{equation*}
where $\bar{\mathbf{r}}$ denotes the complement of $\mathbf{r}$. Because deterministically distinguishing $\mathcal{S}$ guarantees the distinguishability of any subset, it suffices to discriminate the parent set.

The protocol proceeds sequentially across the two copies:
First, each party measures their local qubit of the first copy in the computational ($\sigma_z$) basis. Broadcasting these single-bit outcomes identifies the spatial pattern of correlations and anticorrelations, uniquely resolving the subset index $\mathbf{r}$ via Theorem~\ref{th_2nGHZ}.

Second, each party measures their corresponding qubit of the second copy in the local $\sigma_x$ basis and broadcasts their outcome $m_j \in \{\pm 1\}$. Because $|\psi_n(q,\mathbf{r})\rangle$ is a collective eigenstate satisfying $\sigma_x^{\otimes n} |\psi_n(q,\mathbf{r})\rangle = (-1)^q |\psi_n(q,\mathbf{r})\rangle$, the total count $k$ of $-1$ outcomes directly yields the phase index via $q \equiv k \pmod 2$. Consequently, both $(\mathbf{r}, q)$ are determined unambiguously, completing the proof.
\end{proof}

\begin{theorem}
The four $n$-qubit nonmximally GHZ states
\begin{equation}
\label{eq:four_states}
\begin{aligned}
\ket{\psi_n^\alpha(q, 0)} &= \cos\alpha_q \ket{00}\ket{\mathbf{x}} \pm \sin\alpha_q \ket{11}\ket{\bar{\mathbf{x}}}, \\
\ket{\psi_n^\beta(q, 1)} &= \cos\beta_q \ket{01}\ket{\bar{\mathbf{x}}} \pm \sin\beta_q \ket{10}\ket{\mathbf{x}},
\end{aligned}
\end{equation}
parametrized by $\alpha_q \equiv \alpha - q(\pi/2)$ and $\beta_q \equiv \beta - q(\pi/2)$ with $\alpha, \beta \in (0, \pi/2)$, $q \in \{0,1\}$, and $\mathbf{x} \in \{0,1\}^{n-2}$ with bitwise complement $\bar{\mathbf{x}}$, can be deterministically distinguished via LOCC assisted by a single shared ebit.
\end{theorem}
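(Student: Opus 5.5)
The plan is to split the discrimination into two stages. First, determine nondestructively which of the two orthogonal subspaces the unknown state lies in, using the single shared ebit. Second, resolve the remaining binary index $q$ with an unassisted LOCC protocol for a pair of orthogonal states.

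\textbf{Structure of the four states.} The two states $\ket{\psi_n^\alpha(q,0)}$, $q\in\{0,1\}$, are supported on $\mathrm{span}\{\ket{00}\ket{\mathbf{x}},\ket{11}\ket{\bar{\mathbf{x}}}\}$. In both basis strings the first two qubits have even parity. The two states $\ket{\psi_n^\beta(q,1)}$ are supported on $\mathrm{span}\{\ket{01}\ket{\bar{\mathbf{x}}},\ket{10}\ket{\mathbf{x}}\}$, where the first two qubits have odd parity. So the eigenvalue of $\sigma_z\otimes\sigma_z$ on qubits $1,2$ is $+1$ on the $\alpha$-pair and $-1$ on the $\beta$-pair.

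Within each pair the two states are orthogonal, since $\alpha_1=\alpha-\pi/2$ gives $\cos\alpha_1=\sin\alpha$ and $\sin\alpha_1=-\cos\alpha$. Concretely, the pair is $\cos\alpha\ket{a}+\sin\alpha\ket{b}$ and $\sin\alpha\ket{a}-\cos\alpha\ket{b}$ in the notation $\ket{a}=\ket{00}\ket{\mathbf{x}}$, $\ket{b}=\ket{11}\ket{\bar{\mathbf{x}}}$, with the $\pm$ sign convention of Eq.~\eqref{eq:four_states} absorbed into the phase. The $\beta$-pair works the same way.

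\textbf{Stage one: nondestructive parity readout.} I will reuse the gadget from the proof of Theorem~\ref{th_GHZ_any}. Parties $1$ and $2$ share $\ket{\Phi^+}$. Each applies a local \textsc{cnot}, controlled by their GHZ qubit and targeting their half of the Bell pair. Then each measures their ancilla qubit in $\sigma_z$ and broadcasts the result.

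On a computational basis string with first bits $(x_1,x_2)$, the ancilla is mapped to $\ket{x_1,x_2}$-shifted $\ket{\Phi^+}$. The result is $\ket{\Phi^+}$ if $x_1\oplus x_2=0$ and $\ket{\Psi^+}$ if $x_1\oplus x_2=1$. Every basis string in the support of a given state has the same parity, so the joint state factorizes as (target state)$\otimes\ket{\Phi^+}$ or (target state)$\otimes\ket{\Psi^+}$. The ancilla $\sigma_z\otimes\sigma_z$ outcome therefore reveals the parity, identifying the $\alpha$-pair versus the $\beta$-pair, while leaving the target state exactly intact. The Bell pair is consumed, which accounts for the single ebit.

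\textbf{Stage two: the residual pair.} After stage one, the candidate set is a pair of orthogonal pure $n$-qubit states. By Walgate et al.~\cite{Walgate2000}, any two orthogonal pure multipartite states are perfectly LOCC distinguishable, which fixes $q$.

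I could also give an explicit protocol. Parties $3,\dots,n$ measure in a rotated basis adapted to $\alpha$ (or $\beta$), in the spirit of the $\sigma_x$ parity argument of Theorem~\ref{th_GHZ_2} after a local rotation mapping the nonmaximal pair to the canonical phase-split pair. The last party then chooses a basis conditioned on the broadcast outcomes. For a self-contained argument, citing Walgate suffices.

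\textbf{Main obstacle.} The step needing the most care is verifying that the stage-one readout is genuinely nondestructive. This requires two checks. First, the parity of qubits $1,2$ must be constant on each state's support. That holds because $\mathbf{x}$ versus $\bar{\mathbf{x}}$ only affects qubits $3,\dots,n$. Second, the \textsc{cnot} action must leave no residual entanglement between the target and the ancilla, so that measuring the ancilla does not decohere the superposition of $\ket{a}$ and $\ket{b}$.

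Once this factorization is written out explicitly, with the two ancilla outcomes $\ket{\Phi^+}$ and $\ket{\Psi^+}$ deterministic, the remaining steps are routine. Total resource use is one ebit.
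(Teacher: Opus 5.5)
Your proposal is correct and follows the paper's proof essentially step for step. A shared $\ket{\Phi^+}$ with local \textsc{cnot}s from qubits $1,2$ reads out their $\sigma_z\otimes\sigma_z$ parity nondestructively, separating the $\alpha$-pair from the $\beta$-pair, and the residual bit $q$ is then fixed by the Walgate et al.\ two-state protocol. One caution on your optional explicit protocol: no local unitary can map the nonmaximally entangled pair to the canonical maximally entangled pair, because Schmidt coefficients are local-unitary invariant. None is needed, though: if all parties but one measure in the $\sigma_x$ basis, the last party holds one of two orthogonal single-qubit states determined by $\alpha$ (or $\beta$) and the broadcast outcomes, so relying on Walgate as you do is sound.
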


\begin{proof} 
Let Alice and Bob hold qubits $1$ and $2$, respectively, while the remaining $n-2$ qubits associated with $\mathbf{x}$ are distributed among the other parties. Alice and Bob initially share an auxiliary Bell pair $|\Phi^+\rangle_{A'B'} = \frac{1}{\sqrt{2}}(|00\rangle + |11\rangle)_{A'B'}$. 

Each applies a local $\textsc{cnot}$ gate controlled by their respective target qubit and acting on their half of the Bell pair [Fig.~\ref{fig:1}]. Because computational-basis control operations leave the control registers invariant, the target state remains undisturbed, while the ancilla transforms conditionally on the joint parity of qubits $1$ and $2$. A subsequent local $\sigma_z \otimes \sigma_z$ measurement on the ancilla extracts this parity nondestructively: correlated outcomes ($00$ or $11$) isolate the branch $|\psi_n^\alpha(q, 0)\rangle$, whereas anticorrelated outcomes ($01$ or $10$) yield $|\psi_n^\beta(q, 1)\rangle$.

With the candidate subspace identified and the $n$-qubit state intact, the remaining binary index $q \in \{0,1\}$ is deterministically resolved using the general LOCC protocol for two orthogonal pure states~\cite{Walgate2000}-namely, by Alice measuring in a rotated basis adapted to $\alpha$ (or $\beta$) and coordinating with the other parties via classical communication. 
\end{proof}

\subsection{Entanglement preserving state discrimination}
\begin{theorem}\label{th_2GHZ_nonzero}
For any pair of orthogonal $n$-qubit GHZ states with nonzero Hamming distance, one Bell state is both necessary and sufficient for LOCC discrimination without destroying the target GHZ correlation.
\end{theorem}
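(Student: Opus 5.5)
The argument splits into sufficiency (one Bell pair suffices) and necessity (some entanglement is required). In both directions we first use local unitaries to put the pair in canonical form. Any two orthogonal GHZ states with nonzero Hamming distance can be written, after local $\sigma_x$ flips and phase corrections, as $|\psi_n(q,\emptyset)\rangle$ and $|\psi_n(p,\mathbf{r})\rangle$ with $\mathbf{r}\neq\emptyset$, $\mathbf{r}\subseteq\{1,\dots,n-1\}$. Since $\mathbf{r}\neq\emptyset$ and $n\notin\mathbf{r}$, we can pick a qubit $i\in\mathbf{r}$ and a qubit $j\in\bar{\mathbf{r}}$.

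\emph{Sufficiency.} On every branch of $|\psi_n(q,\emptyset)\rangle$ the product $Z_iZ_j$ equals $+1$. On every branch of $|\psi_n(p,\mathbf{r})\rangle$ it equals $-1$, because qubits $i$ and $j$ are anticorrelated in both computational components. So the two states are eigenvectors of $Z_iZ_j$ with different eigenvalues, and measuring this parity discriminates them. To measure it without disturbing the state, we follow the protocol in the proof of Theorem~\ref{th_GHZ_any}. Parties $i$ and $j$ share $|\Phi^+\rangle_{A'B'}$. Each applies a local $\textsc{cnot}$ controlled on their GHZ qubit and targeting their ancilla half. A $\sigma_z\otimes\sigma_z$ measurement on the ancilla then reveals the parity. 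We will check explicitly that the joint state after the $\textsc{cnot}$s factorizes as the untouched GHZ state times $(\mathbb{1}\otimes X^{s})|\Phi^+\rangle$, where $s$ is the parity. This holds because $s$ is constant across the GHZ superposition. Consequently the ancilla measurement applies only the identity to the target, and its coherences, including those of the non-maximal forms in Eq.~\eqref{GHZ_basis_nonmax}, are preserved exactly.

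\emph{Necessity.} We have to show that no LOCC protocol without an ebit can identify the state while leaving it intact. Take the convex mixture of the two inputs, or equivalently the superposition across the cut $\mathbf{r}|\bar{\mathbf{r}}$. Perfect identification that preserves the state implements the map $|\psi_a\rangle\mapsto|\psi_a\rangle\otimes|a\rangle_{\mathrm{cl}}$, with the flag copied into every party's register. Applied to $\frac{1}{\sqrt2}(|\psi_0\rangle|\psi_0\rangle'+|\psi_1\rangle|\psi_1\rangle')$, where the second copy is a reference held by one party, this map would dephase the superposition. The cleaner route is the following argument. Restrict to the bipartition $\mathbf{r}|\bar{\mathbf{r}}$, and also to $\{i\}|\{1,\dots,n\}\setminus\{i\}$. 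Across this cut each state carries one ebit, and a state-preserving discrimination realizes the nonlocal parity measurement $Z_i Z_j$ on states whose local supports overlap. We will argue that an LOCC measurement which is nondestructive on both states, and which flags the result to every party, can be used on the superposition $(|\psi_n(q,\emptyset)\rangle+|\psi_n(p,\mathbf{r})\rangle)/\sqrt2$ to raise its entanglement across some cut, or equivalently to implement a nonlocal operation with nonzero entangling power. This contradicts monotonicity under LOCC. The fallback is Cohen-type reasoning~\cite{Cohen2007}: a nondestructive projective measurement of a two-party parity is equivalent, up to LOCC, to a measurement that consumes one ebit. Combining this with Theorem~\ref{th_GHZ_2}, which gives the zero-distance contrast, shows the ebit count is exactly one.

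The main obstacle is the necessity direction. Turning "preserves both states and outputs the label" into a rigorous entanglement-monotone contradiction requires choosing the right superposed input and the right cut, so that the final ensemble-averaged entanglement provably exceeds the initial one. I would attempt the cut isolating $\{i,j\}$ from the rest and compute the Schmidt coefficients before and after directly.
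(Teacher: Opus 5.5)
Your sufficiency part is correct and is the paper's protocol: the parity of one qubit in $\mathbf r$ and one in $\bar{\mathbf r}$ is read out through local \textsc{cnot}s onto a shared $\ket{\Phi^+}$, and your factorization check is exactly why the target is untouched. The gap is necessity, which you leave as a list of strategies rather than an argument. The ``Cohen-type'' fallback just restates the claim. Theorem~\ref{th_GHZ_2} concerns zero-distance pairs and gives no lower bound here; ``exactly one'' has to come from sufficiency plus a lower bound. Moreover, the concrete plan you commit to cannot succeed, because across $\{i,j\}|\text{rest}$ the input already carries at least as much entanglement as the flagged output. For $n=3$, $\mathbf r=\{1\}$, $p=q=0$, one finds $(\ket{\psi_3(0,\emptyset)}+\ket{\psi_3(0,\{1\})})/\sqrt2=\ket{+}_1\ket{\Phi^+}_{23}$. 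This has one ebit across $\{1,2\}|\{3\}$, the same as each output state. When $|\mathbf r|,|\bar{\mathbf r}|\ge 2$ (still $p=q=0$), the same superposition is a product of cat states on $\mathbf r$ and on $\bar{\mathbf r}$, with two ebits across $\{i,j\}|\text{rest}$. Your fixed ``$+$'' superposition is also not product across $\mathbf r|\bar{\mathbf r}$ in general: for $q=1$, $p=0$ it is maximally entangled there.

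The missing idea is the one the paper uses, and you name its ingredients in your first sentence without following through: take the equiprobable \emph{mixture} and the cut $\mathbf r|\bar{\mathbf r}$. Writing $\ket{\tilde 0}=\ket{0\cdots 0}$ and $\ket{\tilde 1}=\ket{1\cdots 1}$ in each laboratory, the pair becomes one of $\ket{\Phi^\pm}$ and one of $\ket{\Psi^\pm}$, locally equivalent to the pair of Theorem~\ref{th_GHZ_1}. Their equal mixture is Bell-diagonal with weights $\tfrac12,\tfrac12,0,0$ and hence separable, e.g.\ $\tfrac12(\proj{\Phi^+}+\proj{\Psi^+})=\tfrac12(\proj{\tilde{+}\tilde{+}}+\proj{\tilde{-}\tilde{-}})$. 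A state-preserving discrimination would map it to $\tfrac12\sum_a\proj{\psi_a}\otimes\proj{a}$ with the flag held locally, which carries one ebit across the cut. Since every $n$-party LOCC protocol is LOCC across the cut, the auxiliary state must supply at least one ebit there. To make the count quantitative, use the relative entropy of entanglement: it vanishes on the separable input, is subadditive under tensor products, and by ensemble monotonicity is at least $1$ on the flagged output.
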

\begin{proof}  
Following the parameterization in Eq.~\eqref{GHZ_basis}, any two orthogonal $n$-qubit GHZ states differing by a nonzero Hamming distance can be expressed as \(\displaystyle \{|\psi_n(p,\mathbf{r})\rangle, |\psi_n(q,\mathbf{m})\rangle\}\) with $\mathbf{r}\neq \mathbf{m}$ and $p,q\in\{0,1\}$, $\mathbf{r},\mathbf{m}\subseteq \{1,2,\dots,n-1\}$. 

Although these states can be deterministically distinguished via standard LOCC by consuming the target entanglement~\cite{Walgate2000}, non-destructive discrimination requires an auxiliary resource.

To discriminate the states while preserving the target GHZ correlation, the parties first apply local Pauli rotations $\sigma_z$ to all qubits $i \in \mathbf{r}$. Up to an irrelevant global phase, this maps the given ensemble in to
\begin{equation}
\label{GHZ_Hr2}
\begin{aligned}
\left(\sigma_z^{\mathbf{r}}\otimes \mathbb{I}_2^{\otimes\bar{\mathbf{r}}}\right)|\psi_n(p,\mathbf{r})\rangle &\equiv |\psi_n(p\oplus |\mathbf{r}|,\emptyset)\rangle, \\
\left(\sigma_z^{\mathbf{r}}\otimes \mathbb{I}_2^{\otimes\bar{\mathbf{r}}}\right)|\psi_n(q,\mathbf{m})\rangle &\equiv |\psi_n(q\oplus |\mathbf{r}|,\mathbf{s})\rangle,
\end{aligned}
\end{equation}
where $\mathbf{s}= \mathbf{r} \mathbin{\Delta} \mathbf{m} = (\mathbf{r} \cup \mathbf{m})\setminus (\mathbf{r} \cap \mathbf{m}) \neq \emptyset$ denotes the symmetric difference between $\mathbf{r}$ and $\mathbf{m}$.

Next, choose any party $A_j \in \mathbf{s}$ and any party $A_k \in \bar{\mathbf{s}}$ to share an auxiliary Bell state $|\Phi^+\rangle_{A_j^\prime A_k^\prime} = (|00\rangle + |11\rangle)/\sqrt{2}$ (the circuit is shown in Fig.~\ref{fig:1}).

\begin{figure}[H]
  \centering
  \includegraphics[width=0.8\linewidth]{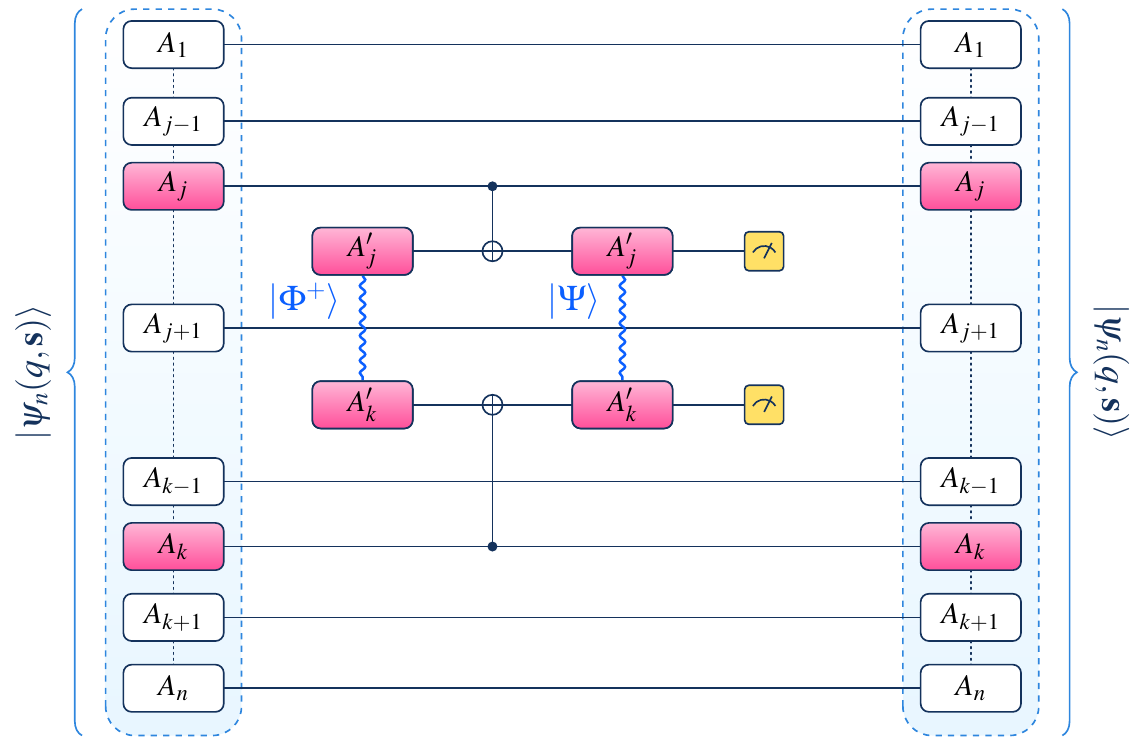}
  \caption{Circuit for nondestructive GHZ-state discrimination using an ancillary Bell state $|\Phi^+\rangle_{A_j' A_k'}$ shared between nodes $A_j (j\in \mathbf{s})$ and $A_k (k\in \bar{\mathbf{s}})$ (pink). Controlled-NOT gates, with the system qubits as controls, are applied from $A_j$ to $A_j'$ and from $A_k$ to $A_k'$. The ancilla remains in $|\Phi^{+}\rangle$ for even parity of the two system qubits and transforms to $|\Psi^{+}\rangle$ for odd parity. Computational-basis measurements of the ancillas therefore reveal the parity without disturbing the input GHZ state, which is recovered unchanged at the output.}
  \label{fig:1}
\end{figure}

Parties $A_j$ and $A_k$ each perform a local controlled-\textsc{not} ($\mathrm{CNOT}$) gate, setting their respective unknown GHZ qubit as the control $c$ and their Bell qubit as the target $t$, where $\mathrm{CNOT}|c\rangle|t\rangle = |c\rangle|t\oplus c\rangle$. Labeling the remaining $n-2$ parties as $E$, the joint state transforms as

  \bean
  &\ket{\psi_n(p\oplus |\mathbf{r}|,\emptyset)}_{A_j^cA_k^cE}\otimes|\Phi^+\rangle_{{A^\prime_j}^t{A^\prime_k}^t}&\\&\xrightarrow[\mathrm{CNOT}_{A_k^c{A^\prime_k}^t} ]{\mathrm{CNOT}_{A_j^c{A_j^\prime}^t}} \ket{\psi_n(p\oplus |\mathbf{r}|,\emptyset)}_{A_j^cA_k^cE}\otimes|\Phi^+\rangle_{{A^\prime_j}^t{A^\prime_k}^t}&\\
 &\mathrm{and}& \\
&\ket{\psi_n(q\oplus |\mathbf{r}|,\mathbf{s})}_{A_j^cA_k^cE}\otimes|\Phi^+\rangle_{{A^\prime_j}^t{A^\prime_k}^t} &\\&\xrightarrow[\mathrm{CNOT}_{A^c_k{A^\prime_k}^t} ]{\mathrm{CNOT}_{A_j^c{A_j^\prime}^t}} \ket{\psi_n(q\oplus |\mathbf{r}|,\mathbf{s})}_{A_j^cA_k^cE}\otimes|\Psi^+\rangle_{{A^\prime_j}^t{A^\prime_k}^t},&
\eean

where $|\Psi^+\rangle = (|01\rangle + |10\rangle)/\sqrt{2}$.

Parties $A$ and $B$ then measure their auxiliary Bell qubits $A_t$ and $B_t$ in the computational ($\sigma_z$) basis and broadcast their measurement outcomes. Correlated outcomes ($00$ or $11$) indicate the target state was $|\psi(p,\mathbf{r})\rangle$, whereas anticorrelated outcomes ($01$ or $10$) indicate $|\psi(q,\mathbf{m})\rangle$. In either case, the target GHZ state is completely unentangled from the auxiliary qubits and remains undisturbed up to the initial local rotation $\sigma_z^{\mathbf{r}}$, which is subsequently inverted. The states are thus distinguished deterministically without destroying the shared GHZ correlation. 

To establish necessity, recall that the given pair of states \(\displaystyle \{|\psi_n(p,\mathbf{r})\rangle, |\psi_n(q,\mathbf{m})\rangle\}\) are mapped via local unitaries to
{\setlength{\abovedisplayskip}{4pt}\setlength{\belowdisplayskip}{4pt}%
 \setlength{\abovedisplayshortskip}{2pt}\setlength{\belowdisplayshortskip}{2pt}%
\begin{align}
\label{GHZ_Hr3}
|\psi_n(q\oplus |\mathbf{r}|,\mathbf{s})\rangle &= \tfrac{1}{\sqrt{2}}\big(|1\rangle^{\otimes \mathbf{s}}|0\rangle^{\otimes \bar{\mathbf{s}}} + (-1)^{q\oplus |\mathbf{r}|}|0\rangle^{\otimes \mathbf{s}}|1\rangle^{\otimes \bar{\mathbf{s}}}\big)\nonumber\\
|\psi_n(p\oplus |\mathbf{r}|,\emptyset)\rangle &= \tfrac{1}{\sqrt{2}}\big(|0\rangle^{\otimes n} + (-1)^{p\oplus|\mathbf{r}|}|1\rangle^{\otimes n}\big), 
\end{align}

Consider the bipartite partition $A|B$, with the parties $\mathbf{s}$ grouped in laboratory $A$ and the complement $\bar{\mathbf{s}}$ in laboratory $B$, and LOCC allowed across the cut. In a suitable local basis, the states in Eq.~\eqref{GHZ_Hr3} reduce across this partition to the pair of Bell states
\begin{equation*}
|\Psi^{+}\rangle_{AB} = \tfrac{1}{\sqrt{2}}\big(|01\rangle + |10\rangle\big), \qquad
|\Phi^{+}\rangle_{AB} = \tfrac{1}{\sqrt{2}}\big(|00\rangle + |11\rangle\big),
\end{equation*}}
which is again local unitarily equivalent to the pair in Eq.~\eqref{Bell_2}. Distinguishing this pair without consuming entanglement is precluded across this cut by Theorem~\ref{th_GHZ_1}. Consequently, non-destructive discrimination of the target GHZ states under LOCC strictly requires an auxiliary entanglement resource of at least one ebit (a shared Bell state).
 
\end{proof}

The sufficiency argument in Theorem~\ref{th_2GHZ_nonzero} readily generalizes to arbitrary pairs of $n$-qubit nonmaximally entangled GHZ states of the form in Eq.~\eqref{GHZ_basis_nonmax} with nonzero Hamming distance, yielding the following result:

\begin{cor} Any two $n$-qubit nonmaximally entangled GHZ states with nonzero Hamming distance can be perfectly distinguished by LOCC assisted by a single shared ebit, while preserving the input GHZ state.\end{cor}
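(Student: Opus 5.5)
Following the argument of Theorem~\ref{th_2GHZ_nonzero}, we first put the pair in normal form. By Eq.~\eqref{GHZ_basis_nonmax} the two states are $\ket{\psi_n^{\alpha^{\mathbf{r}}}(p,\mathbf{r})}$ and $\ket{\psi_n^{\alpha^{\mathbf{m}}}(q,\mathbf{m})}$ with $\mathbf{r}\neq\mathbf{m}$. Applying local bit flips $\sigma_x$ on the qubits in $\mathbf{r}$ (not the $\sigma_z$ used before, since here we need to move the support rather than adjust a phase) sends the first state's support $\{\ket{1}^{\otimes\mathbf{r}}\ket{0}^{\otimes\bar{\mathbf{r}}},\ket{0}^{\otimes\mathbf{r}}\ket{1}^{\otimes\bar{\mathbf{r}}}\}$ to $\{\ket{0}^{\otimes n},\ket{1}^{\otimes n}\}$. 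The same flips send the second state's support to $\{\ket{1}^{\otimes\mathbf{s}}\ket{0}^{\otimes\bar{\mathbf{s}}},\ket{0}^{\otimes\mathbf{s}}\ket{1}^{\otimes\bar{\mathbf{s}}}\}$, where $\mathbf{s}=\mathbf{r}\mathbin{\Delta}\mathbf{m}\neq\emptyset$. The amplitudes $\cos\alpha$ and $\sin\alpha$ only get permuted between the two branches, and signs are irrelevant. The key point is that the argument never uses the amplitudes. It uses only the computational supports, so nonmaximality plays no role.

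Because $\mathbf{r},\mathbf{m}\subseteq\{1,\dots,n-1\}$, qubit $n$ always lies in $\bar{\mathbf{s}}$, so $\bar{\mathbf{s}}$ is nonempty. Pick $A_j\in\mathbf{s}$ and $A_k\in\bar{\mathbf{s}}$, and let them share $\ket{\Phi^+}_{A_j'A_k'}$. Each applies a local CNOT from their system qubit to their ancilla qubit, as in Fig.~\ref{fig:1}. On every computational basis string $\ket{\mathbf{x}}$ of the system, the ancilla becomes $(\sigma_x^{x_j\oplus x_k}\otimes\mathbb{I})\ket{\Phi^+}$.

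Next we check that this parity is constant on each support. On the first support the bits $x_j$ and $x_k$ are equal in both branches, so $x_j\oplus x_k=0$. On the second support they always differ, so $x_j\oplus x_k=1$. The parity therefore factors out of the superposition. The joint state becomes $\ket{\psi}\otimes\ket{\Phi^+}$ or $\ket{\psi'}\otimes\ket{\Psi^+}$ respectively, with the system state unchanged, whatever its amplitudes.

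Alice and Bob then measure their ancillas in $\sigma_z$ and compare the outcomes. Correlated outcomes identify the first state and anticorrelated outcomes the second, deterministically. Finally the parties undo the local flips, which returns the input GHZ state exactly.

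The main obstacle is confirming that the factorization really is independent of $\alpha^{\mathbf{r}}$ and $\alpha^{\mathbf{m}}$. This holds because the CNOT action depends only on basis strings and the parity is uniform across each branch. Orthogonality of the pair comes for free from the disjoint supports, so it needs no separate argument. The only cost is the single ebit, which the measurement consumes. Necessity is not claimed in the corollary, so no converse is needed.
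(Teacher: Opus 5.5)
Your proposal is correct and is essentially the paper's argument. The paper proves this corollary by noting that the sufficiency half of Theorem~\ref{th_2GHZ_nonzero} never uses the amplitudes; that sufficiency argument is one shared $\ket{\Phi^+}$ between a party in $\mathbf{s}$ and one in $\bar{\mathbf{s}}$, local CNOTs, and a $\sigma_z$ readout of the ancilla. Your use of $\sigma_x$ rather than the paper's $\sigma_z$ for the normalizing flips is right, since a $\sigma_z$ does not move the computational support, and the flips could even be dropped, because $x_j\oplus x_k$ is already constant on each original support and differs between the two supports whenever $j\in\mathbf{s}$ and $k\notin\mathbf{s}$.
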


\begin{theorem}\label{th_GHZ_nonzero}
Any ensemble ${\Gamma}$ of $k \geq 2$ mutually orthogonal $n$-qubit GHZ states possessing pairwise non-zero Hamming distance can be deterministically distinguished via LOCC assisted by $\lceil\log_2 k\rceil$ ancillary ebits, leaving the unknown target state completely undisturbed.
\end{theorem}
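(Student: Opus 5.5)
The plan is to generalize the single-ebit parity gadget of Theorem~\ref{th_2GHZ_nonzero} (Fig.~\ref{fig:1}) into a binary search over $\Gamma$. First, because every pair in $\Gamma$ has nonzero Hamming distance, no two members share the same index set. After fixing local unitaries as in Eq.~\eqref{GHZ_basis}, we may write $\Gamma=\{|\psi_n(q_i,\mathbf{r}_i)\rangle\}_{i=1}^k$ with the $\mathbf{r}_i\subseteq\{1,\dots,n-1\}$ pairwise distinct. The phase bits $q_i$ are then fixed functions of $\mathbf{r}_i$. So identifying $\mathbf{r}$ identifies the state, and the task reduces to learning $\mathbf{r}$ without disturbing the GHZ coherence.

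The key observation is that every two-qubit parity $\sigma_z^{(j)}\otimes\sigma_z^{(l)}$ has a definite eigenvalue on each $|\psi_n(q,\mathbf{r})\rangle$. Both branches $|1\rangle^{\mathbf{r}}|0\rangle^{\bar{\mathbf{r}}}$ and $|0\rangle^{\mathbf{r}}|1\rangle^{\bar{\mathbf{r}}}$ give $x_j\oplus x_l=[\,j\in\mathbf{r}\,]\oplus[\,l\in\mathbf{r}\,]$. Hence the parity is a deterministic function $f_{jl}(\mathbf{r})$, and measuring it with the CNOT–Bell-pair gadget leaves the target state exactly intact, just as in the proof of Theorem~\ref{th_2GHZ_nonzero}. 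Because $n\notin\mathbf{r}$, the parities $f_{jn}(\mathbf{r})=[\,j\in\mathbf{r}\,]$ for $j=1,\dots,n-1$ recover $\mathbf{r}$ bit by bit.

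Counting ebits needs more care, because probing the $n-1$ coordinates one at a time would cost $n-1$ ebits rather than $\lceil\log_2 k\rceil$. I would use arbitrary $\mathbb{Z}_2$-linear functionals instead. For any $T\subseteq\{1,\dots,n\}$ of even size, $\bigotimes_{j\in T}\sigma_z^{(j)}$ has eigenvalue $(-1)^{|T\cap\mathbf{r}|}$ on both branches. This even-size condition is what keeps the eigenvalue independent of the branch, so the parities $|T\cap\mathbf{r}|\bmod 2$ for even $|T|$ are nondestructive and extractable. Such a multiparty parity can be extracted with one ebit: the parties in $T$ CNOT their qubits onto a single ancilla and then teleport or relay it. Alternatively, I would distribute the same ebit as a GHZ-type ancilla, accepting some ambiguity in what counts as "one ebit."

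The search then runs adaptively. At each round, choose an even-size $T$ whose parity splits the current candidate set into two parts of size at most $\lceil m/2\rceil$, measure it with one Bell pair, and recurse. After $\lceil\log_2 k\rceil$ rounds a single candidate remains. The target was never measured, so it is returned undisturbed, and each ancilla is consumed exactly once.

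The main obstacle is showing that a balanced splitting functional exists at every step. For distinct vectors $\mathbf{r}_1,\dots,\mathbf{r}_m\in\mathbb{F}_2^{n-1}$, I would argue by averaging. A uniformly random linear functional $T\subseteq\{1,\dots,n-1\}$, completed to even size by including qubit $n$ when needed, separates each fixed pair with probability $1/2$. Some functional therefore cuts at least half of the $\binom{m}{2}$ pairs, but this does not immediately give a cut with both parts of size at most $\lceil m/2\rceil$.

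If exact bisection fails, I would fall back on a non-adaptive linear-code argument. It suffices to find $t=\lceil\log_2 k\rceil$ functionals that are jointly injective on the $k$ vectors, i.e.\ a linear map $\mathbb{F}_2^{n-1}\to\mathbb{F}_2^{t}$ that is injective on the set. This may fail for adversarial sets. If so, I would restrict the claim to sets of $k=2^t$ states forming a coset of a subspace (such as the maximal set $\mathcal{S}_+$ of Theorem~\ref{th_2nGHZ}), where projecting onto a complement works, and bound the general case by embedding into such a structure. This is the step I expect to require the most adjustment.
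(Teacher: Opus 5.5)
You correctly isolate the crux: a balanced split must exist at every stage. The paper's proof simply asserts this. Your proposal does not supply it, and it cannot be supplied.

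\textbf{Multiparty parities are not one-ebit operations.} A nondestructive measurement of $\bigotimes_{j\in T}\sigma_z^{(j)}$ with $|T|>2$ costs more than one ebit. Relaying a single ancilla through the parties of $T$ needs $|T|-1$ teleportations. A $|T|$-party GHZ ancilla needs $|T|-1$ Bell pairs to distribute, and the paper itself treats a GHZ ancilla as strictly stronger than a Bell pair. So one Bell pair per round buys only two-party parities, i.e.\ the functionals $r_j$ (partner $n$) and $r_j\oplus r_l$, which is exactly the paper's protocol. Your coset fallback does not reach general $\Gamma$ either. A $k$-element set need not lie in an affine subspace of dimension $\lceil\log_2 k\rceil$.

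\textbf{A balanced two-party split need not exist, and the bound itself fails.} Take $n=6$ and $\Gamma=\{|\psi_6(0,\mathbf r)\rangle:\mathbf r=\emptyset,\{1\},\dots,\{5\}\}$, so $k=6$ and $\lceil\log_2 k\rceil=3$. Every $r_j$ splits $\Gamma$ as $1|5$ and every $r_j\oplus r_l$ as $2|4$. So the bisection the paper's proof asserts does not exist.

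More strongly, no protocol of any kind works with three Bell pairs. Suppose two members satisfy $\mathbf r\,\Delta\,\mathbf m=P$. Across the cut $P|\bar P$ they are, up to local encoding, $\Phi^+$ and $\Psi^+$. Their equal mixture is separable. The flagged output of a nondestructive discrimination, however, can be locally rotated into $\Phi^+$. Hence every such cut must be crossed by an ancilla Bell pair; this is the necessity argument of Theorem~\ref{th_2GHZ_nonzero}. Here the relevant cuts are $\{i\}|\text{rest}$ and $\{i,j\}|\text{rest}$ for $i,j\le 5$. So each of parties $1,\dots,5$ must hold a Bell-pair half. Also, no connected component of the Bell-pair graph may consist of just two of these parties. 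Three edges on six vertices cannot achieve this. Four can: use the paths $1$--$2$--$3$ and $4$--$5$--$6$ and measure $r_1\oplus r_2$, $r_2\oplus r_3$, $r_4\oplus r_5$, $r_5$.

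Thus, with ebits meaning bipartite Bell pairs as in the paper's own protocol, the statement is false. No refinement of your plan can prove it. What this route does give is the $\lceil\log_2 k\rceil$ bound under an extra hypothesis that guarantees balanced two-party splits, e.g.\ the full set $\mathcal S_+$. In general it gives $\min(k-1,n-1)$ Bell pairs, by measuring, each against party $n$, the at most $k-1$ coordinates used by a coordinate decision tree.
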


\begin{proof} 
The protocol follows a recursive bisection strategy similar to that in Theorem~\ref{th_GHZ_any}. At each stage, there exist two parties whose joint $\sigma_z$ parity correlated ($\{\ket{00}, \ket{11}\}$) or anticorrelated ($\{\ket{01}, \ket{10}\}$) partitions the candidate states into subsets of size at most $\lceil |\Gamma|/2 \rceil$. 

To extract this parity non-destructively, the two parties consume a shared Bell pair $\ket{\Phi^+}_{A_1 A_2}$ as an ancilla. Each party applies a local $\textsc{cnot}$ gate controlled by their respective GHZ qubit and targeting their half of the ancilla Bell pair. A subsequent local $\sigma_z \otimes \sigma_z$ measurement on the ancilla reveals the parity partition without disturbing the underlying GHZ state.

Iterating this bisection at most $\lceil \log_2{k} \rceil$ times deterministically isolates the vector index $\mathbf{r}$ and consequently the identity of the unknown GHZ state without disturbing it. \end{proof}

\section{Inequivalent Resources in LOCC Discrimination}
\label{sec:multipartite_entanglement}

Denoting the input states in Theorems~\ref{th_GHZ_2} and \ref{th_2GHZ_nonzero} by $\rho_1$ and $\rho_2$, respectively, we have
\be\label{mix_state_1} \begin{aligned}
  \rho_1 &= \tfrac{1}{2} \Big( \proj{\psi_n(0,\emptyset)} + \proj{\psi_n(1,\emptyset)} \Big) \text{ and} \\
  \rho_2 &= \tfrac{1}{2} \Big( \proj{\psi_n(0,\emptyset)} + \proj{\psi_n(0,\mathbf{r})} \Big),
\end{aligned}\ee
where $\mathbf{r} \subseteq \{1, 2, \dots, n-1\}$ and $\mathbf{r}\neq \emptyset$.

\begin{lem}
Neither $\rho_1$ nor $\rho_2$ defined in Eq.~\eqref{mix_state_1} possesses genuine multipartite entanglement (GME).
\end{lem}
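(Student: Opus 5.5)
For $\rho_1$, I would just reuse the computation in the proof of Theorem~\ref{th_GHZ_2}. The off-diagonal terms cancel, so
\[
\rho_1=\tfrac12\proj{0}^{\otimes n}+\tfrac12\proj{1}^{\otimes n}.
\]
This is a convex combination of fully product states. It is therefore fully separable, and in particular biseparable, so it carries no GME.

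For $\rho_2$, the plan is to exhibit an explicit biseparable decomposition. The natural cut is $\mathbf{r}|\bar{\mathbf{r}}$, used in the necessity part of Theorem~\ref{th_2GHZ_nonzero}. Across this cut, with logical qubits $\ket{\bar 0}_{\mathbf{r}}=\ket{0}^{\otimes\mathbf{r}}$, $\ket{\bar 1}_{\mathbf{r}}=\ket{1}^{\otimes\mathbf{r}}$ and similarly on $\bar{\mathbf{r}}$, the two states become the Bell states $\ket{\Phi^+}$ and $\ket{\Psi^+}$. The equal mixture of these two is
\[
\tfrac12(\proj{\Phi^+}+\proj{\Psi^+})=\tfrac12\big(\proj{+}\otimes\proj{+}+\proj{-}\otimes\proj{-}\big),
\]
which one checks by expanding in the $\sigma_x$ basis: both states are $+1$ eigenstates of $\sigma_x\otimes\sigma_x$, and $\tfrac12(\proj{\Phi^+}+\proj{\Psi^+})$ is exactly the projector onto that eigenspace, divided by two. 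Pulling back, I get
\[
\rho_2=\tfrac12\big(\proj{\chi_+}_{\mathbf{r}}\otimes\proj{\chi_+}_{\bar{\mathbf{r}}}+\proj{\chi_-}_{\mathbf{r}}\otimes\proj{\chi_-}_{\bar{\mathbf{r}}}\big),
\]
with $\ket{\chi_\pm}=(\ket{0\cdots0}\pm\ket{1\cdots1})/\sqrt2$ on each block. So $\rho_2$ is separable across $\mathbf{r}|\bar{\mathbf{r}}$, hence biseparable, hence not GME.

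The main step needing care is verifying this identity directly in the $n$-qubit computational basis rather than trusting the logical-qubit reduction, because the reduction is only an isometric embedding. I would do it by expanding both sides. The diagonal terms on $\ket{0^n},\ket{1^n},\ket{1_{\mathbf{r}}0_{\bar{\mathbf{r}}}},\ket{0_{\mathbf{r}}1_{\bar{\mathbf{r}}}}$ each carry weight $1/4$ on both sides. For the coherences, each of $\ket{0^n}\bra{1^n}$ and $\ket{1_{\mathbf{r}}0_{\bar{\mathbf{r}}}}\bra{0_{\mathbf{r}}1_{\bar{\mathbf{r}}}}$ (and their conjugates) has coefficient $1/4$ on both sides. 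The cross coherences between the two supports cancel between the $\chi_+\chi_+$ and $\chi_-\chi_-$ terms. This is the point where the signs must be checked.

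Finally I would remark that $\rho_2$ is still entangled across the $\mathbf{r}|\bar{\mathbf{r}}$-crossing cuts whenever $|\mathbf{r}|,|\bar{\mathbf{r}}|$ allow it. Within $\mathbf{r}$, $\ket{\chi_\pm}$ are GHZ-type, so $\rho_2$ is not fully separable. The statement concerns only GME, however, and the biseparable decomposition above suffices for it.
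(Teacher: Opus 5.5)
Your proof is correct. For $\rho_1$ it is exactly the paper's argument; for $\rho_2$ you take a different route.

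The paper applies the G\"uhne--Seevinck criterion for GHZ-diagonal states. It computes $|\rho_{1,2^n}|=\tfrac14$ and notes that the single surviving term on the right-hand side, coming from the complementary strings $1_{\mathbf{r}}0_{\bar{\mathbf{r}}}$ and $0_{\mathbf{r}}1_{\bar{\mathbf{r}}}$, also equals $\tfrac14$. You instead write down an explicit decomposition that is separable across $\mathbf{r}|\bar{\mathbf{r}}$. This is a genuine bipartition, since $\mathbf{r}\neq\emptyset$ and $n\in\bar{\mathbf{r}}$. Your sign check is right: the two product vectors are $\tfrac12\big(\ket{0}^{\otimes n}+\ket{1}^{\otimes n}\pm(\ket{1_{\mathbf{r}}0_{\bar{\mathbf{r}}}}+\ket{0_{\mathbf{r}}1_{\bar{\mathbf{r}}}})\big)$, so the cross terms cancel in the equal mixture. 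The direct expansion is not logically required. The logical-qubit embedding is a product $V_{\mathbf{r}}\otimes V_{\bar{\mathbf{r}}}$ of local isometries, so it carries separable decompositions to separable ones.

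Your route is constructive and self-contained. It also shows slightly more: separability across one fixed cut, rather than only a convex combination of states biseparable across different cuts. It further sidesteps a weakness in the paper's proof. The paper states the single inequality on $\rho_{1,2^n}$ as an ``if and only if,'' but it is only necessary for biseparability. For example, the GME pure state $\ket{\psi_n(0,\mathbf{r})}$ satisfies it, with $|\rho_{1,2^n}|=0\le\tfrac12$. For GHZ-diagonal states, sufficiency needs every anti-diagonal coherence $|\rho_{i,2^n+1-i}|$ checked, which is equivalent to every GHZ weight being at most $\tfrac12$. The paper checks only $i=1$; the other nonzero coherence also saturates its bound, so its conclusion survives.

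The paper's route, properly stated, buys generality: it decides all GHZ-diagonal mixtures at once. For instance, weights $p,1-p$ on the same two states give GME for every $p\neq\tfrac12$. So $\rho_2$ sits exactly on the boundary of the biseparable set, whereas your decomposition is specific to equal weights.

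One caveat concerns your closing remark, which the lemma does not need. Its justification is not valid as written: entangled components in one decomposition do not rule out another, fully separable one, and for $n=2$ the state is in fact fully separable. For $n\ge3$ a correct argument is local filtering. Across any cut that splits $\mathbf{r}$ or $\bar{\mathbf{r}}$, one side holds qubits from both blocks. That side can project onto its all-$0$ and all-$1$ strings, leaving a Bell pair with probability $\tfrac12$, which is impossible from a state separable across that cut.
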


\begin{proof} The state $\rho_1$ reduces to\begin{equation*}\rho_1 = \frac{1}{2} \left( \proj{0}^{\otimes n} + \proj{1}^{\otimes n} \right),\end{equation*}which is an explicit classical mixture of product states and is therefore fully separable across all partitions. For the second mixture, $\rho_2$ is a GHZ-diagonal state. {\renewcommand{\arraystretch}{0.7}\setlength{\arraycolsep}{2pt}
\begin{equation*}
\rho_2=\frac14
\begin{pmatrix}
1 & \cdots & 0 & \cdots & 0 & \cdots & 1\\
\vdots & & \vdots & & \vdots & & \vdots\\
0 & \cdots & 1 & \cdots & 1 & \cdots & 0\\
\vdots & & \vdots & & \vdots & & \vdots\\
0 & \cdots & 1 & \cdots & 1 & \cdots & 0\\
\vdots & & \vdots & & \vdots & & \vdots\\
1 & \cdots & 0 & \cdots & 0 & \cdots & 1
\end{pmatrix}_{2^n\times 2^n}.
\end{equation*}} For any $n$-qubit GHZ-diagonal state $\rho$, the criterion of G"uhne and Seevinck~\cite{GuhneSeevinck2010} states that $\rho$ is biseparable (free of GME) if and only if\begin{equation}|\rho_{1,2^n}|
\leq
\sum_{j=2}^{2^{n-1}}\sqrt{\rho_{j,j}\,\rho_{2^n+1-j,\,2^n+1-j}}\label{eq:GS1}\end{equation} A violation of Eq.~\eqref{eq:GS1} certifies GME, whereas satisfaction guarantees biseparability. For $\rho_1$ both sides of Eq.~\eqref{eq:GS1} vanish identically ($0 \le 0$). For $\rho_2$ with any $\mathbf{r} \neq \emptyset$, the outer coherence yields $\vert{}\rho_{1,2^n}\vert{} = \frac{1}{4}$. The only nonvanishing term in the right-hand sum corresponds to the complementary bit-strings $x_{\mathbf{r}}$ and $\bar{x}_{\mathbf{r}}$ (indexed by $a, b$ with $a + b = 2^n + 1$), where $\rho_{a,a} = \rho_{b,b} = \frac{1}{4}$. The sum therefore evaluates identically to $\sqrt{\rho_{a,a}\,\rho_{b,b}} = \frac{1}{4}$, satisfying the bound with equality. Thus, neither mixture possesses genuine multipartite entanglement.
\end{proof}

Although neither $\rho_1$ nor $\rho_2$ exhibits GME, the auxiliary resources required for their nondemolition discrimination are strictly inequivalent. The constituent states of $\rho_2$ possess a nonzero relative Hamming distance, enabling their perfect LOCC discrimination while preserving their GHZ entanglement using only a single two-qubit Bell state (Theorem\ref{th_2GHZ_nonzero}). In contrast, the constituent states of $\rho_1$ differ solely in their relative phase $q$ (zero Hamming distance), precluding their entanglement-preserving discrimination under unassisted LOCC (Theorem~\ref{th_GHZ_2}). Preserving the target correlation here necessitates an additional $n$-qubit GHZ state, an auxiliary resource strictly more powerful than a bipartite Bell state and no extraction of resource is allowed.

\section{Conclusion} 

In this Letter, we introduced an operational framework for entanglement-preserving quantum state discrimination under LOCC. By enforcing strict resource invariance-demanding classical information retrieval without degrading target nonclassical correlations, we established the conditions governing catalytic state identification across bipartite and multipartite systems. While net resource extraction is strictly precluded for Bell-state ensembles, multipartite systems exhibit a marked operational asymmetry governed by relative Hamming geometry. 

Specifically, a single shared Bell state serves as a genuine catalyst, enabling the deterministic identification of any orthogonal $n$-qubit GHZ pair separated by a nonzero Hamming distance while leaving the target states intact. This advantage collapses at zero Hamming distance, where auxiliary entanglement offers no operational leverage. Extending this analysis to larger ensembles, we demonstrated that pairwise nonzero Hamming distances guarantee local distinguishability, whereas particular triples remain fundamentally indistinguishable. Furthermore, we uncovered a nontrivial hierarchy of entanglement costs for discriminating larger multipartite ensembles. Broadly, our findings demonstrate that nondestructive state discrimination serves as a sensitive probe of catalytic resource dynamics, revealing nonlocal structures that remain obscured when entanglement consumption is permitted.

\section{Acknowledgment}
R. Rahaman acknowledges support from the ANRF Advanced Research Grant (ARG), Grant No. ANRF/ARG/2025/012066/MS. A.B. acknowledges the financial support from ANRF-NPDF (File no. PDF/2025/002762).
\bibliography{Ref}
\end{document}